\newtheorem{reduction}{Reduction}
\newcommand{\ignore}[1]{}
\newcommand{\maxsat}{\textsc{MaxSAT}}
\newcommand{\sat}{\textsc{SAT}}
\newcommand{\unsat}{\textsc{UnSAT}}
\newcommand{\pathq}{\textsc{Path}}
\newcommand{\cycle}{\textsc{Cycle}}
\newcommand{\sink}{\textsc{Sink}}
\newcommand{\wmaxsat}{\textsc{Weighted MaxSAT}}
\newcommand{\cons}[1]{\textsc {Cons}(#1)}
\newcommand{\certainty}[1]{\textsc {Certainty}(#1)}
\definecolor{amber}{rgb}{1.0, 0.75, 0.0}
\definecolor{capri}{rgb}{0.0, 0.75, 1.0}
\definecolor{coolblack}{rgb}{0.0, 0.18, 0.39}
\definecolor{asparagus}{rgb}{0.53, 0.66, 0.42}
\begin{document}
\pgfplotsset{every tick label/.append style={font=\scriptsize}}
\setlength{\abovedisplayskip}{5pt}
\setlength{\belowdisplayskip}{5pt}
\setlength{\abovedisplayshortskip}{5pt}
\setlength{\belowdisplayshortskip}{5pt}

\title{A SAT-based System for Consistent Query Answering}
%
%
\author{Akhil A. Dixit\inst{1}\and Phokion G. Kolaitis\inst{1, 2}}
\authorrunning{A. Dixit and Ph. Kolaitis}
%
\institute{University of California Santa Cruz \and IBM Research - Almaden}
\maketitle              
\begin{abstract}
An inconsistent database is a database that violates one or more integrity constraints, such as functional dependencies. Consistent Query Answering is a rigorous and principled approach to the semantics of queries posed against inconsistent databases. The consistent answers to a query on an inconsistent database is the intersection of the answers to the query on every repair, i.e., on every consistent database that differs from the given inconsistent one in a minimal way. Computing the consistent answers of a fixed conjunctive query on a given inconsistent database can be a coNP-hard problem, even though every fixed conjunctive query is efficiently computable on a given consistent database.

We designed, implemented, and evaluated  CAvSAT, a SAT-based system for consistent query answering.  CAvSAT leverages a set of natural reductions from the complement of consistent query answering to SAT and to Weighted MaxSAT. The system is capable of handling unions of conjunctive queries and arbitrary denial constraints, which include functional dependencies as a special case. We report results from experiments evaluating CAvSAT on both synthetic and real-world databases. These results provide evidence that a SAT-based approach can give rise to a comprehensive and scalable system for consistent query answering.
\end{abstract}

\section{Introduction}
Managing inconsistencies in databases is a challenge that arises in several different contexts.
Data cleaning is the main approach towards managing inconsistent databases (see the survey \cite{DBLP:journals/ftdb/IlyasC15}). In data cleaning, clustering techniques and/or domain knowledge are used to
resolve violations of integrity constraints in a given inconsistent database, thus producing a single consistent database.
This approach, however, is often \emph{ad hoc}; for example, if a person has two different social security numbers in a database, which of the two should be kept?

The framework of database repairs and consistent query answering, introduced by Arenas, Bertossi, and Chomicki \cite{Arenas99}, is an alternative, and arguably more principled, approach to data cleaning.
In contrast to data cleaning,
the inconsistent database is left as is; instead, inconsistencies are handled at query time by considering all
possible repairs of the inconsistent database, where a {\em repair} of an inconsistent database $I$ is a consistent database $J$ that  differs
from $I$ in a ``minimal" way.
The main algorithmic problem in this framework is to compute the
   \emph{consistent answers}  to a
query $q$ on a given database $I$, that is, the tuples that lie
in the intersection of the results of $q$ applied on each repair of $I$ (see the monograph \cite{Bertossi11}).
Computing the consistent answers to a query $q$ on $I$ can be  computationally harder  than evaluating $q$ on $I$, because an inconsistent database may have exponentially many repairs.
By now there is an extensive literature on the computational complexity of the consistent answers for  different classes of constraints and queries \cite{CateFK12,Kolaitis12,Koutris17,Wijsen12,Wijsen13}.
For key constraints (the most common  constraints) and for conjunctive queries (the most frequently asked
queries), the consistent answers appear to exhibit an intriguing trichotomy, namely, the consistent answers of every fixed conjunctive query under key constraints are either first-order rewritable (hence, polynomial-time computable), or are polynomial-time computable but not first-order rewritable, or are coNP-complete. So far, this trichotomy has been proved for self-join free conjunctive queries by Koutris and Wijsen \cite{Koutris16,Koutris17}.
Moreover, Koutris and Wijsen designed a quadratic algorithm that, given such a conjunctive query and a set of key constraints, determines the side of the trichotomy in which the consistent answers to the query fall. Prior to this work, Fuxman and Miller identified a class of conjunctive queries, called $C_\textit{forest}$, whose consistent answers are FO-rewritable \cite{Fuxman05,FuxmanM07}. Membership in $C_\textit{forest}$, however, is sufficient but not necessary condition for the FO-rewritability of the consistent answers.

Several academic prototype systems for consistent query answering have been developed \cite{Arenas03,Barcelo03,ChomickiH04,Fuxman05,FuxmanM05,Greco03,Kolaitis13,MannaRT11,MarileoB10}. In particular, the ConQuer system \cite{Fuxman05,FuxmanM05} is tailored to queries in the class $C_\textit{forest}$. 
Other systems use logic programming \cite{Barcelo03,Greco03}, compact representations of repairs \cite{Chomicki04}, or reductions to solvers. Specifically, the system in \cite{MannaRT11} uses reductions to answer set programming, while the EQUIP system in \cite{Kolaitis13} uses reductions to binary integer programming and the subsequent deployment of CPLEX.
It is fair to say, however, no comprehensive and scalable system for consistent query answering exists at present; this state of affairs
 has impeded the broader adoption of the framework of repairs and consistent answers as a principled alternative to data cleaning.

In this paper, we report on a SAT-based system for consistent query answering, which we call CAvSAT (Consistent Answers via \sat{}). The CAvSAT system  leverages  natural reductions from the complement of consistent query answering to \sat{} and to \wmaxsat{}. As such, it can handle the consistent answers to unions of conjunctive queries under \emph{denial} constraints, a broad class of integrity constraints that include functional dependencies (hence also key constraints) as a special case. CAvSAT is the first SAT-based system for consistent query answering.
We carried out a preliminary stand-alone evaluation of CAvSAT on both
synthetic and real-world databases. The first set of experiments involved the consistent answers of conjunctive queries under key constraints on synthetic databases in which each relation has up to one million tuples. 
One of the \emph{a priori} unexpected findings is that, for conjunctive queries whose consistent answers are first-order rewritable, CAvSAT had comparable or even better performance to evaluating the first-order rewritings using a database engine, such as PostgreSQL.
The second set of experiments involved the consistent answers of (unions of) conjunctive queries under functional dependencies on restaurant inspection records in Chicago and New York with some of the relations exceeding 200000 tuples. The CAvSAT source code is available at the GitHub repository \url{https://github.com/uccross/cavsat} via a BSD-style license.

While much more work remains to be done, the experimental finding reported here provide evidence that a SAT-based approach can indeed give rise to a comprehensive and scalable system for consistent query answering.

\section{Basic Notions and Background}


\noindent{\bf Databases, Constraints, and Queries}~
A \textit{relational database schema} $\mathcal{R}$ is a finite collection of relation symbols, each with a fixed positive integer as its arity. The attributes of a relation symbol are names for its columns; attributes can also be identified by their positions, thus  $Attr(R) = \{1, ..., n\}$ denotes the set of  attributes of $R$.
An $\mathcal{R}$-\emph{database instance}  or, simply, an $\mathcal{R}$-\emph{instance}  is a collection $I$ of finite relations $R^I$, one for each relation symbol $R$ in $\mathcal R$. An expression of the form $R^I (a_1, ..., a_n)$ is a \textit{fact} of the instance $I$ if $(a_1, ..., a_n) \in R^I$.
Every $\mathcal R$-instance can be identified with the (finite) set of its facts. The \emph{active domain} of $I$ is the set of all values occurring in facts of $I$.

Relational database schemas are often accompanied by a set of integrity constraints that impose semantic restrictions on the allowable instances.
A \textit{functional dependency} (FD) $ \vec{x} \rightarrow \vec{y}$ on a relation symbol $R$  is an integrity constraint asserting that if two facts  agree on the attributes in $\vec{x}$, then they must also agree on the attributes in $\vec{y}$.  A \textit{key} is a minimal subset $\vec{x}$ of $Attr(R)$ such that the FD  $ \vec{x} \rightarrow Attr(R)$ holds. In this case, the attributes in $\vec{x}$ are called \textit{key attributes} of $R$ and they are denoted by underlining their corresponding positions; thus, $R(\underline{A, B}, C)$ denotes that the attributes $A$ and $B$ form a key of $R$. Every functional dependency is expressible in first-order logic. For example, the key constraint $A,B \rightarrow C$ in  $R(\underline{A, B}, C)$ is expressed by the first-order formula
$$\forall x, y, z,  z' (R(x, y, z) \land R(x, y, z') \rightarrow  z = z')$$

Functional dependencies are an important special case of \emph{denial constraints} (DCs), which are expressible by first-order formulas of the form
$$\forall x_1, ..., x_n \neg (\varphi(x_1, ..., x_n) \land \psi(x_1, ..., x_n)),$$
or, equivalently,
$$\forall x_1, ..., x_n  (\varphi(x_1, ..., x_n) \rightarrow \neg \psi(x_1, ..., x_n)),$$
where $\varphi(x_1, ..., x_n)$ is a conjunction of atomic formulas and $\psi(x_1, ..., x_n)$ is a conjunction of  expressions of the form $(x_i\; \mbox{op}\; x_j)$ with each $\mbox{op}$  a built-in predicate, such as $=, \neq, <, >, \leq, \geq$.
%
In words, a denial constraint prohibits a set of tuples that satisfy certain conditions from appearing together in a database instance.

Let $k$ be a positive integer.  A \emph{$k$-ary query} on a relational database schema $\mathcal R$ is a function $q$ that takes  an $\mathcal R$-instance $I$ as argument and returns a $k$-relation $q(I)$ on the active domain of $I$ as value.  A \emph{boolean query} on $\mathcal R$ is a function that takes an $\mathcal R$-instance $I$ as argument and returns true or false as value. As is well known, first-order logic has been successfully used as a query language. In fact, it forms the core of SQL,  the main commercial database query language.

A \textit{conjunctive query} is a first-order formula built using the relational symbols, conjunctions, and existential quantifiers. Thus, each conjunctive query is expressible by a first-order formula of the form
$$q(\vec{z}):= \exists \vec{w}\; (R_1(\vec{x_1}) \land ... \land R_m(\vec{x_m})),$$
where each $\vec{x_i}$ is a tuple consisting of variables and constants, $\vec{z}$ and $\vec{w}$ are tuples of variables, and the variables in $\vec{x_1}, ..., \vec{x_m}$ appear in exactly one of $\vec{z}$ and $\vec{w}$. Clearly, a conjunctive query with $k$ free variables $\vec{z}$ is a $k$-ary query,
while a conjunctive query with no free variables (i.e., all variables are existentially quantified) is a boolean query.
Conjunctive queries are also known as \emph{select-project-join} (SPJ) queries and  are among the most frequently asked queries in databases. For example, the binary  conjunctive query $q(s,t):= \exists c (\mathrm{Enrolls}(s,c) \land \mathrm{Teaches}(t,c))$ returns the set of all pairs $(s,t)$ such that  student $s$ is enrolled in a course taught by  teacher $t$, while the boolean conjunctive query $q():= \exists x, y, z (E(x,y)\land E(y,z)\land E(z,x))$ tests whether or not a graph with an edge relation $E$ contains a triangle.

\medskip

\noindent{\bf Repairs and Consistent Answers}~ Let $\mathcal{R}$ be a database schema and let $\Sigma$ be a set of integrity constraints  on $\mathcal{R}$.
An $\mathcal R$-instance $I$ is \emph{consistent} if $I \models \Sigma$, that is, $I$ satisfies every constraint in $\Sigma$; otherwise, $I$ is  \emph{inconsistent}.
 A \emph{repair} of an inconsistent  instance $I$ w.r.t. $\Sigma$  is a  consistent instance $J$ that differs from $I$ in a ``minimal" way.  Different notions of minimality give rise to different types of repairs (see \cite{Bertossi11} for a comprehensive survey). Here, we  focus on \emph{subset repairs}, the most extensively studied type of repairs. An instance $J$ is  a \emph{subset repair} of an  instance $I$ if $J \subseteq I$ (where $I$ and $J$ are viewed as sets of facts), $J\models \Sigma$,  and there exists no instance $J'$ such that $J'\models \Sigma$ and $J \subset J'\subset I$. From now on, by \emph{repair} we mean a subset repair.
  Arenas, Bertossi, and Chomicki \cite{Arenas99} used  repairs  to give rigorous semantics to query answering on inconsistent databases. Specifically, assume that $q$ is a query,  $I$ is an $\mathcal R$-instance, and $\vec{t}$ is a tuple of values. We  say that $\vec{t}$ is a \emph{consistent answer} (also referred as a \emph{certain answer}) to $q$ on $I$ w.r.t. $\Sigma$ if $\vec{t} \in q(J)$, for every repair $J$ of $I$. We write $\cons{q, I, \Sigma}$ to denote the set of all \emph{consistent answers} to $q$ on  $I$ w.r.t. $\Sigma$, i.e.,
  $$\cons{q, I, \Sigma} = \bigcap\{q(J): \mbox{$J$ is a repair of $I$ w.r.t. $\Sigma$}\}.$$

   If $\Sigma$ is a fixed set of integrity constraints and $q$ is a fixed query, then the main computational problem associated with the consistent answers is:
   given an instance $I$,  compute \cons{$q$, $I$, $\Sigma$}.
If $q$ is a boolean query, then  computing the certain answers becomes the decision problem $\certainty{q,\Sigma}$: given an instance $I$,
is $q$ true on every repair $J$ of $I$ w.r.t.$ \Sigma$? When the constraints $\Sigma$ are understood from the context, we will write $\cons{q, I}$
and $\certainty{q}$, instead of $\cons{q, I,\Sigma}$
and $\certainty{q,\Sigma}$.


\medskip

\noindent{\bf Computational Complexity of Consistent Answers}
 If $\Sigma$ is a fixed finite set of denial constraints and  $q$ is a $k$-ary conjunctive query, where $k\geq 1$, then the following problem is in coNP: given an instance  $I$ and a tuple $\vec{t}$, is $\vec{t}$ a certain answer to $q$ on $I$ w.r.t.\ $\Sigma$?  This is so because to check that $\vec{t}$ is not a certain answer to $q$ on $I$ w.r.t.\ $\Sigma$, we  guess a repair $J$ of $I$ and verify that $\vec{t}\not \in q(J)$ (note that $J$ is a subset of $I$, evaluating a fixed conjunctive query on a given database is a polynomial-time task, and testing if $J$ is a repair of $I$ w.r.t. denial constraints is a polynomial-time task as well). Similarly, if $q$ is a boolean conjunctive query, then the decision problem $\certainty{q,\Sigma}$ is in coNP.
%

Even for key constraints and boolean conjunctive queries, $\certainty{q,\Sigma}$ exhibits a variety of behaviors within coNP. Indeed, consider the queries
\begin{enumerate}
\item $\pathq():= \exists x, y, z\; R(\underline{x}, y) \land S(\underline{y}, z)$;
\item $\cycle():= \exists x, y\; R(\underline{x}, y) \land S(\underline{y}, x)$;
\item $\sink():= \exists x, y, z\; R(\underline{x}, z) \land S(\underline{y}, z)$.
\end{enumerate}
Fuxman and Miller \cite{FuxmanM07} showed that $\certainty{\pathq}$ is FO-rewritable, i.e., there is a first-order definable boolean query $q'$ such that  $\cons{\pathq,\Sigma,I}=q'(I)$, for every instance $I$.
 In fact, $q'$ is $\exists x, y, z\; R(\underline{x}, y) \land S(\underline{y}, z) \land \forall y'(R(\underline{x}, y') \rightarrow \exists z' S(\underline{y'}, z')).$
 Wijsen \cite{WijsenR10} showed that  $\certainty{\cycle}$ is in P, but it is not
FO-rewritable, while Fuxman and Miller \cite{FuxmanM07} showed that $\certainty{\sink}$ is coNP-complete via a reduction from the complement of
\textsc{Monotone 3-SAT}.

The preceding state of affairs sparked a series of investigations aiming to obtain classification results concerning the computational complexity of the consistent answers (e.g., see  \cite{Grieco05,Kolaitis12,Lembo06,Wijsen09,WijsenR10}). The most definitive result to date is a \emph{trichotomy} theorem,  established by Koutris and Wijsen \cite{Koutris15,Koutris16,Koutris17}, for boolean self-join free conjunctive queries, where a conjunctive query is \emph{self-join free} if no relation symbol occurs more than once in the query. This trichotomy theorem asserts that if $q$ is a self-join free conjunctive query with one key per relation symbol, then $\certainty{q}$ is FO-rewritable, or in P but not FO-rewritable, or coNP-complete. Moreover, there is a quadratic algorithm to decide, given such a query, which of the three cases of the trichotomy holds. It remains an open problem whether or not   this trichotomy extends to arbitrary boolean conjunctive queries and  to  arbitrary functional dependencies or denial constraints.

\medskip


\section{Consistent Query Answering for Key Constraints}\label{sec:key-constraints}
In this section, we assume that $\mathcal{R}$ is a database schema and  $\Sigma$ is a finite set of primary key constraints on $\mathcal R$, i.e., there is one key constraint per each relation of $\mathcal{R}$.
We first consider boolean conjunctive queries and,  for each fixed boolean conjunctive query $q$,
we  give a natural polynomial-time reduction from \certainty{$q$} to \unsat{}. We then extend this reduction to non-boolean conjunctive queries, so that for every fixed non-boolean conjunctive query $q$, the consistent answers to $q$ can  be computed by iteratively solving \wmaxsat{} instances. In what follows, we heavily use the notions of \textit{key-equal groups} of facts and  \textit{minimal witnesses} to a conjunctive query.
\begin{definition}\label{key-equal-group}\textbf{Key-Equal Group.} \emph{Let $I$ be an $\mathcal{R}$-instance. We say that two facts of a relation $R$ of $I$ are \textit{key-equal}, if they agree on the key attributes of $R$. A set $S$ of facts of $I$ is called a \textit{key-equal group} of facts if every two facts in $S$ are key-equal, and no fact in $S$ is key-equal to some fact in  $I\backslash S$.}
\end{definition}
\begin{definition}\label{min-witness}\textbf{Minimal Witness.} \emph{Let $I$ be an $\mathcal{R}$-instance and let $S$ be a sub-instance of $I$. We say that $S$ is a \textit{minimal witness} to a conjunctive query $q$ on $I$, if $S \models q$, and for every proper subset $S'$ of $S$, we have that $S' \not\models q$.}
\end{definition}

For each relation $R$ of $I$, the key-equal groups of $R$ are computed by an SQL query that involves grouping the key attributes of $R$. Similarly, the set of minimal witnesses to a fixed conjunctive query $q$ on $I$ are computed efficiently as follows. A unique integer \textit{factID} is attached to each fact, by adding an attribute \textit{FactID} to each relation in $I$ that appears in $q$. Thus, a new instance $I'$ is built, where each relation $R'(\textit{FactID}, \vec{\underline{A}}, \vec{B})$ in $I'$ is obtained from a relation $R(\vec{\underline{A}}, \vec{B})$ in $I$. A new non-boolean query $q'$ is constructed, such that each atom $R'(\textit{factID}_R, \vec{\underline{x}}, \vec{y})$ in $q'$ is constructed from an atom $R(\vec{\underline{x}}, \vec{y})$ of $q$. The variables of $q'$ that correspond to the \textit{FactID} attributes are not existentially quantified. It is easy to see that each tuple (without duplicate \textit{factID}s) in $q'(I')$ is in 1-1 correspondence with a minimal witness to $q$ on $I$.

\newpage
 \noindent{\bf Boolean Conjunctive Queries}
 Let  $q$ be  a fixed boolean conjunctive query over $\mathcal{R}$.

\begin{reduction}\label{reduction1}Given an $\mathcal R$-instance $I$, we  construct a CNF-formula $\phi$  as follows.

For each fact $f_i$ of $I$,  introduce a boolean variable $x_i$, $1\leq i\leq n$.
  Let $\mathcal{G}$ be the set of key-equal groups of facts of $I$, and let $\mathcal{W}$ be the set of minimal witnesses to $q$ on $I$.
\begin{itemize}
\item For each $G_j \in \mathcal{G}$, construct the clause $\alpha_j = \underset{f_i \in G_j}{\lor} x_i$.
\item For each $W_j \in \mathcal{W}$, construct the clause $\beta_j = \underset{f_i \in W_j}{\lor} \neg x_i$.
\item Construct the boolean formula $\phi = \bigg(\overset{|\mathcal{G}|}{\underset{i=1}{\land}}\alpha_i \bigg) \land \bigg(\overset{|\mathcal{W}|}{\underset{j=1}{\land}}\beta_j$\bigg).

\end{itemize}
\end{reduction}
\preservecounter{proposition}
\movetoappendix{\subsection*{Proof of Proposition~\ref{prop1}}}
\copytoappendix{
\begin{proposition}\label{prop1}
Let $\phi$ be the CNF-formula constructed using Reduction \ref{reduction1}.
\begin{itemize}
\item The size of $\phi$ is polynomial in the size of $I$.
\item The formula $\phi$ is satisfiable if and only if \certainty{$q$, $\Sigma$} is false on $I$.
\end{itemize}
\end{proposition}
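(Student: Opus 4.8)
The plan is to set up a correspondence between truth assignments to the variables $x_1,\dots,x_n$ and sub-instances of $I$, and then verify each half of the biconditional by translating the two families of clauses into combinatorial properties of the associated sub-instance. To an assignment $\tau$ I associate the sub-instance $J_\tau = \{f_i : \tau(x_i) = 1\}$. Under primary key constraints, a sub-instance is consistent exactly when it contains no two distinct key-equal facts, so a repair of $I$ is precisely a sub-instance that selects exactly one fact from each key-equal group in $\mathcal{G}$: one fact per group is consistent, and maximality forces at least one (a group is key-equal only to itself, so any empty group could be augmented). Moreover, since $q$ is a conjunctive query and hence monotone, a sub-instance $J$ satisfies $q$ if and only if some minimal witness $W \in \mathcal{W}$ satisfies $W \subseteq J$. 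These are exactly the facts that the clauses encode: $\alpha_j$ says that at least one fact of the group $G_j$ is retained, and $\beta_j$ says that not all facts of the witness $W_j$ are retained.

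For the size bound I would simply count. The formula $\phi$ has $n$ variables and $|\mathcal{G}| + |\mathcal{W}|$ clauses. Since each fact lies in exactly one key-equal group, $|\mathcal{G}| \le n$ and the total length of the $\alpha$-clauses is at most $n$. Because $q$ is fixed with, say, $m$ atoms, every minimal witness has at most $m$ facts, whence $|\mathcal{W}| \le n^m$, each $\beta_j$ has at most $m$ literals, and the total length of the $\beta$-clauses is $O(n^m)$. As $m$ is a constant, $\phi$ has size polynomial in the size of $I$; the sets $\mathcal{G}$ and $\mathcal{W}$ are moreover computable in polynomial time, as already described before the reduction.

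For the equivalence, the direction ``\certainty{$q$, $\Sigma$} false $\Rightarrow \phi$ satisfiable'' is direct: if \certainty{$q$, $\Sigma$} is false on $I$, there is a repair $J$ with $J \not\models q$, and the characteristic assignment of $J$ satisfies every $\alpha_j$ (a repair keeps exactly one fact per group) and every $\beta_j$ (since $J \not\models q$, no minimal witness is contained in $J$ by the monotonicity observation), so $\phi$ is satisfiable. The converse direction is where the only real subtlety lies, and I expect it to be the main obstacle. A satisfying assignment $\tau$ yields a sub-instance $J_\tau$ that contains at least one fact from each group and contains no minimal witness in full; however, $J_\tau$ need not be a repair, because the $\alpha$-clauses enforce ``at least one'' rather than ``exactly one'' fact per group, so $J_\tau$ may itself be inconsistent. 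The remedy is to shrink $J_\tau$ to a sub-instance $J$ that retains exactly one fact from each key-equal group. Then $J$ is a genuine repair, and since $J \subseteq J_\tau$ while no minimal witness is contained in $J_\tau$, a fortiori no minimal witness is contained in $J$; by the monotonicity observation, $J \not\models q$, which witnesses that \certainty{$q$, $\Sigma$} is false on $I$.

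The crux to emphasize in writing this up is precisely that ``containing no minimal witness in full'' is a downward-closed property of sub-instances, so passing from the possibly inconsistent $J_\tau$ to the smaller, genuine repair $J$ can only destroy witnesses, never create one. Everything else is a routine translation between the clauses and the ``one fact per group / no witness retained'' description of a falsifying repair.
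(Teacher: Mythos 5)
Your proposal is correct and follows essentially the same route as the paper's own proof: the forward direction uses the characteristic assignment of a falsifying repair, and the converse handles the ``at least one vs.\ exactly one fact per group'' subtlety by shrinking the satisfying assignment's sub-instance to one fact per key-equal group, exactly as the paper does by picking one true variable from each $\alpha$-clause, with the downward-closedness of ``contains no minimal witness'' (monotonicity of $q$) justifying the shrink. The size analysis likewise matches the paper's counting, with witnesses bounded by $n^m$ for $m$ the fixed number of atoms.
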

}
The proofs of all propositions  are given in the Appendix.

\movetoappendix{
\begin{proof}
Let $n$ be the number of facts in $I$. There are exactly $n$ boolean variables used in $\phi$. Clearly, $|\mathcal{G}| \leq n$, therefore the number of $\alpha$-clauses is bounded above by $n$. Similarly, for each $G_j \in \mathcal{G}$, we have that $|G_j| \leq n$. Hence, the length of each $\alpha$-clause is also at most $n$. If $d$ is the number of atoms in $q$, then we have that $|\mathcal{W}| \leq n^d$; moreover, for every $W_j \in \mathcal{W}$, we have that $|W_j| \leq d$. Hence, the number of $\beta$-clauses in $\phi$ is at most $n^d$, and the length of each $\beta$-clause is bounded above by $d$. Since the query $q$ is not part of the input to the problem \certainty{$q$}, we have that $d$ is a fixed constant.

To prove the second part of the proposition,  assume first that $\certainty{q}$ is false on $I$. Hence, there exists a repair $R$ of $I$ that falsifies $q$. Construct an assignment $\hat{a}$ to the variables in $\phi$ by setting $\hat{a}(x_i) = 1$ if and only if $f_i \in R$. Since exactly one fact from each key-equal group of $I$ is present in $R$, exactly one variable from each $\alpha$-clause is set to 1 in $\hat{a}$. Also, since $R \not\models q$, no minimal witness to $q$ is in $R$. Therefore, at least one variable from each $\beta$-clause is set to 0 in $\hat{a}$. Hence, $\hat{a}$ satisfies $\phi$. For the other direction, let $\hat{a}$ be a satisfying assignment to $\phi$. Since no two $\alpha$-clauses share a variable, we can construct a set $X$ of variables by arbitrarily choosing exactly one $x_i$ from each $\alpha$-clause, such that $\hat{a}(x_i) = 1$. Construct a set $R$ of facts of $I$, such that $f_i \in R$ if and only if $x_i \in X$. It is easy to see that $R$ contains exactly one fact from each key-equal group of $I$, and no minimal witness to $q$ on $I$ is present in $R$. Hence, $R$ is a repair of $I$ that falsifies $q$.
\end{proof}
}

\medskip
\noindent {\bf Non-boolean Conjunctive Queries}
 Let $q$ be a fixed non-boolean query on $\mathcal R$, i.e., $q$ has one or more free variables. We extend Reduction \ref{reduction1} to Reduction \ref{reduction2}, so that one can reason about the certain answers to
  $q$ on an $\mathcal R$-instance $I$ using
  the satisfying assignments of the CNF-formula $\phi$ constructed via Reduction \ref{reduction2}.

 \looseness = -1 We  use the term \emph{potential answers} to refer to the answers to $q$ on $I$.
  If $\vec{a}_l$ is such a potential answer, we write   $q[\vec{a}_l]$ to denote the boolean conjunctive query obtained from $q$ by replacing the free variables in the body  of $q$ by corresponding constants from $\vec{a}_l$.

\begin{reduction}\label{reduction2}\begin{sloppypar}
Given an $\mathcal R$-instance $I$, we construct a CNF-formula $\phi$ as follows.

For each fact $f_i$ of $I$,  introduce a boolean variable $x_i$, $1\leq i\leq n$,
Let $\mathcal{G}$ be the set of key-equal groups of facts of $I$ and   let $\mathcal{A}$ be the set of potential answer to $q$ on $I$. For each $\vec{a}_l \in \mathcal{A}$, let $\mathcal{W}^l$ denote the set of minimal witnesses to the boolean query $q[\vec{a}_l]$ on $I$. For each $\vec{a}_l \in \mathcal{A}$, introduce a boolean variable $p_1$,
$1\leq l \leq  ..., |\mathcal{A}|$.
\begin{itemize}
\item For each $G_j \in \mathcal{G}$, construct the clause $\alpha_j = \underset{f_i \in G_j}{\lor} x_i$.
\item For each $\vec{a}_l \in \mathcal{A}$ and for each $W^l_j \in \mathcal{W}^l$, construct the clause ${\beta^l_j = \bigg(\underset{f_i \in W^l_j}{\lor} \neg x_i \bigg) \lor \neg p_l}$.
\item Construct the boolean formula $\phi = \bigg(\overset{|\mathcal{G}|}{\underset{i=1}{\land}}\alpha_i\bigg)\land\bigg(\overset{|\mathcal{A}|}{\underset{l=1}{\land}} \bigg( \overset{|\mathcal{W}^l|}{\underset{j=1}{\land}}\beta^l_j\bigg)\bigg)$.
\end{itemize}\end{sloppypar}
\end{reduction}

\preservecounter{proposition}
\movetoappendix{\subsection*{Proof of Proposition~\ref{prop2}}}
\copytoappendix{
\begin{proposition}\label{prop2}
Let $\phi$ be the CNF-formula constructed using Reduction \ref{reduction2}.
\begin{itemize}
\item The size of $\phi$ is polynomial in the size $I$.
\item There exists a satisfying assignment to $\phi$ in which a variable $p_l$ is set to 1 if and only if $\vec{a}_l \notin \cons{q, I}$.
\end{itemize}
\end{proposition}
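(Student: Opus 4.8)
The plan is to follow the template of the proof of Proposition~\ref{prop1}, viewing Reduction~\ref{reduction2} as the boolean reduction equipped with one ``switch'' variable $p_l$ per potential answer. For the size bound I would count exactly as before: there are $n$ fact-variables $x_i$ and $|\mathcal{A}|$ answer-variables $p_l$, and since $q$ is fixed with, say, $k$ free variables, every potential answer is a $k$-tuple over the active domain of $I$, so $|\mathcal{A}|\le |q(I)| = O(n^k)$. The $\alpha$-clauses number $|\mathcal{G}|\le n$, and for each $\vec{a}_l$ the witnesses obey $|\mathcal{W}^l|\le n^d$, where $d$ is the (constant) number of atoms of $q$; hence there are at most $|\mathcal{A}|\cdot n^d = O(n^{k+d})$ clauses $\beta^l_j$, each of length at most $d+1$. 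As $k$ and $d$ are constants, $\phi$ has size polynomial in the size of $I$.

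For the second part I would first fix the intended reading as a \emph{per-answer} biconditional: for each fixed $l$, the claim is that $\vec{a}_l\notin\cons{q,I}$ holds if and only if $\phi$ has \emph{some} satisfying assignment in which $p_l$ is set to $1$. A single assignment realizing $p_l=1$ simultaneously for all non-consistent answers need not exist --- already with $R(\underline{A},B)$, two key-equal facts, and $q(u):=\exists w\,R(\underline{w},u)$, both projected answers are non-consistent yet no repair falsifies both --- so the existential must be scoped per $l$. The driving mechanism is that setting $p_l=1$ makes $\neg p_l=0$, so the clauses $\beta^l_j$ collapse to the pure-negative witness clauses $\bigvee_{f_i\in W^l_j}\neg x_i$ of Reduction~\ref{reduction1} applied to the boolean query $q[\vec{a}_l]$, whereas setting $p_l=0$ satisfies every $\beta^l_j$ trivially and imposes no constraint from $\vec{a}_l$.

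Both directions then reuse the repair--assignment correspondence of Proposition~\ref{prop1}. For the forward direction, if $\vec{a}_l\notin\cons{q,I}$ there is a repair $R$ with $\vec{a}_l\notin q(R)$; I would set $\hat{a}(x_i)=1$ iff $f_i\in R$, set $\hat{a}(p_l)=1$, and set every other $\hat{a}(p_{l'})=0$. The $\alpha$-clauses hold because $R$ picks exactly one fact per group, the clauses $\beta^l_j$ hold because $R\not\models q[\vec{a}_l]$ leaves every minimal witness $W^l_j$ missing a fact, and all clauses for $l'\ne l$ hold because $\neg p_{l'}=1$. For the converse, given a satisfying $\hat{a}$ with $\hat{a}(p_l)=1$, I would extract a repair $R\subseteq\{f_i:\hat{a}(x_i)=1\}$ by choosing exactly one fact with $x_i=1$ from each group (possible by the $\alpha$-clauses); since each clause $\beta^l_j$ forces at least one fact of the witness $W^l_j$ to have $x_i=0$, no minimal witness of $q[\vec{a}_l]$ sits inside $\{f_i:\hat{a}(x_i)=1\}$, hence none sits inside $R$, so $R\not\models q[\vec{a}_l]$ and $\vec{a}_l\notin\cons{q,I}$.

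The main obstacle I anticipate is not any single computation but two points of care. First, the repair-extraction step needs the monotonicity already used in Proposition~\ref{prop1}: a satisfying assignment may set several $x_i$ to $1$ within one group, so one must thin it to a genuine repair and observe that passing to a subset can only destroy, never create, an occurrence of a minimal witness. Second, one must justify the ``decoupling'': because each $p_{l'}$ occurs only in its own clauses $\beta^{l'}_j$, setting the irrelevant switches to $0$ neutralizes them, which is precisely what lets the one formula $\phi$ certify the non-consistency of each answer independently. Making these two observations explicit is what upgrades the boolean argument to a proof of the non-boolean statement.
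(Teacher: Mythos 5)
Your proposal is correct and follows essentially the same argument as the paper's proof: the same polynomial counting for the size bound, and the same repair--assignment correspondence in both directions, including the paper's step of thinning a satisfying assignment to exactly one fact per key-equal group via the disjoint $\alpha$-clauses. Your explicit per-$l$ scoping of the biconditional (with the small counterexample) and the remark that setting $p_{l'}=0$ decouples the other answers are clarifications the paper leaves implicit, but they do not change the route.
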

}
\movetoappendix{
\begin{proof}Let $n$ be the number of facts in $I$. Let $m$ be  the arity of the query $q$ and let  $d$ and the number of atoms of  $q$. Since an answer to $q$ is a set of $m$ facts, we have that $|\mathcal{A}| \leq n^m$. For each $l$, the number of witnesses in $\mathcal{W}^l$ is bounded by $n^d$. Therefore, there are at most $n^{m + d}$ $\beta$-clauses in $\phi$, each of length at most $d$. Since the query $q$ is not part of the input to \cons{$q$}, the quantities $m$ and $d$ can be treated are fixed constants. It follows directly from Proposition \ref{prop1} that both the number of $\alpha$-clauses and the length of each $\alpha$-clause in $\phi$ are  bounded above by $n$.

To prove the second part of the proposition, assume first that $\vec{a}_l \notin \cons{q}$. Hence, there exists a repair $R$ of $I$, such that no minimal witness to $q[\vec{a}_l]$ is in $R$. Construct an assignment $\hat{a}$ to the variables in $\phi$ as follows. Set $\hat{a}(x_i) = 1$ if and only if $f_i \in R$. Set $\hat{a}(p_l) = 1$, and set $\hat{a}(p_j) = 0$ for all $j \neq l$. Since exactly one fact from each key-equal group of $I$ is in $R$, the assignment sets to 1 exactly one variable from each $\alpha$-clause. Since no minimal witness to $q[\vec{a}_l]$ is in $R$, at least one variable from each $\beta^l$-clause is set to 0 in $\hat{a}$, thus satisfying all $\beta^l$-clauses, even when $p_l$ is set to 1. All other $\beta$-clauses are satisfied trivially because of the assignment $\hat{a}(p_j) = 0$, for all $j \neq l$. In the other direction, let $\hat{a}$ be the satisfying assignment to $\phi$, such that $\hat{a}(p_l) = 1$. Since no two $\alpha$-clauses share a variable, we can construct a set $X$ of variables by arbitrarily choosing exactly one $x_i$ from each $\alpha$-clause, such that $\hat{a}(x_i) = 1$. Construct a set $R$ of facts of $I$, such that $f_i \in R$ if and only if $x_i \in X$. It is easy to see that exactly one fact from each key-equal group of $I$ is present in $R$. Since $\hat{a}(p_l) = 1$  and since all $\beta^l$-clauses are satisfied by $\hat{a}$, at least one fact from each minimal witness to $q[\vec{a}_l]$ is missing in $R$. Hence, $R$ must be a repair of $I$ such that $R \not\models q[\vec{a}_l]$.
\end{proof}
}
\begin{example}\label{example1}
Consider the flights information database in Table \ref{flights}. The database schema has three relations, namely, \textit{Airlines}, \textit{Tickets}, and \textit{Flights}; the key attributes of each relation are underlined.
This database is inconsistent, as the sets $\{f_1, f_3\}$ and $\{f_8, f_9\}$ of facts violate the key constraints of the relations \textit{Airlines} and \textit{Flights}, respectively.

Suppose we want to find out the codes of the flights that belong to an airline from Canada and   fly to the airport OAK. This can be expressed by the unary conjunctive query $q(x):= \textit{Flights}(x,y,z,p,\text{`OAK'},q,r) \land \textit{Airlines}(z, \text{`Canada'})$.
\begin{table}
\caption{Flight information records.}\label{flights}
\begingroup
\renewcommand{\arraystretch}{1.1}
\setlength{\tabcolsep}{3pt}
\begin{minipage}[t]{0.4\textwidth}
\begin{tabular}[t]{|c||c|c|}
\hline
\multicolumn{3}{|c|}{\textit{Airlines}}\\\hline
\textit{Fact} & \underline{AIRLINE} & COUNTRY\\
\hline\hline
$f_1$ & Southwest & United States\\\hline
$f_2$ & Jazz Air & Canada\\\hline
$f_3$ & Southwest & Canada\\\hline
\end{tabular}
\end{minipage}
\begin{minipage}[t]{0.55\textwidth}
\hspace{-0.28in}
\begin{tabular}[t]{|c||c|c|c|c|}
\hline
\multicolumn{5}{|c|}{\textit{Tickets}}\\\hline
\textit{Fact} & \underline{PNR} & CODE & CLASS & FARE\\\hline\hline
$f_4$ & MJ9C8R & SWA 1568 & Economy & 430 USD\\\hline
$f_5$ & KLF88V & MI 471 & First & 914 USD\\\hline
$f_6$ & NJ5RT3 & SWA 1568 & First & 112 USD\\\hline
\end{tabular}
\end{minipage}\vspace{0.02in}
\begin{minipage}{\textwidth}
\begin{tabular}[t]{|c||c|c|c|c|c|c|c|}
\hline
\multicolumn{8}{|c|}{\textit{Flights}}\\\hline
\textit{Fact} & \underline{CODE} & \underline{DATE} & AIRLINE & FROM & TO & DEPARTURE & ARRIVAL\\
\hline\hline
$f_7$ & JZA 8329 & 01/29/19 & Jazz Air \ignore{Southwest} & GEG & OAK & 16:12 PST & 18:00 PST\\\hline
$f_8$ & SWA 1568 & 01/29/19 & Silkair & YYZ & YAM & 18:55 EST & \ignore{20:44}18:44 EST\\\hline
$f_9$ & SWA 1568\ignore{SWA 1959} & 01/29/19 & Southwest & LAX & OAK & 16:18 PST & 17:25 PST\\\hline
\end{tabular}
\end{minipage}
\endgroup
\end{table}

There are two potential answers to $q$, namely, `JZA 8329' and `SWA 1568', so we introduce their corresponding variables $p_1$ and $p_2$. Since the facts $f_1$ and $f_3$ form a key-equal group, we construct an $\alpha$-clause $(x_1 \lor x_3)$. Similarly, since the set $\{f_2, f_7\}$ of facts is a minimal witness to $q[\text{`JZA 8329'}]$, we construct the $\beta$-clause $(\neg x_2 \lor \neg x_7 \lor \neg p_1)$.
  By continuing this way, we obtain the following CNF-formula $\phi$:\\
$(x_1\lor x_3)\land x_2 \land x_4 \land x_5 \land x_6 \land x_7 \land (x_8 \lor x_9) \land (\neg x_2 \lor \neg x_7 \lor \neg p_1) \land (\neg x_3 \lor \neg x_9 \lor \neg p_2)$.\\
 Clauses $x_2$, $x_7$, and $(\neg x_2 \lor \neg x_7 \lor \neg p_1)$ force $p_1$ to take value 0 in each satisfying assignment of $\phi$,  because the facts $f_2$ and $f_7$ appear in every repair of $I$, thus making `JZA 8329' a consistent answer to $q$. In contrast, there is a satisfying assignment of $\phi$ in which $p_2$ is set to 1, which implies that  `SWA 1568' is not  a consistent answer to $q$.
\end{example}
\medskip
\noindent{\bf Optimizing the reductions}\label{optimizations}
In real-life applications, a large part of the inconsistent database is consistent. For a boolean query $q$, if a minimal witness to $q$ is present in the consistent part of the database instance, then we can immediately conclude that \certainty{$q$, $I$, $\Sigma$} is true. This can be checked with simple SQL queries that involve grouping on the key attributes of each relation. Similarly, for non-boolean queries, the consistent answers coming from the witnesses that belong to the consistent part of the database can be computed efficiently using SQL queries. All additional consistent answers can then be found using the preceding reduction. In this case, we need to introduce boolean variables corresponding to only those facts that contribute to the additional potential answers. This significantly reduces the size of the CNF-formulas produced by Reductions \ref{reduction1} or \ref{reduction2}. This optimization has been used earlier in \cite{Kolaitis13}, where  \cons{$q$, $I$, $\Sigma$} was reduced to an instance of binary integer programming.

\section{Consistent Query Answering Beyond Key Constraints}
In this section, we consider the  broader class of denial constraints and the more expressive class of unions of conjunctive queries. Note that computing the consistent answers of unions of conjunctive queries under denial constraints is still in coNP, but the consistent answers of a union $Q := q_1\cup \ldots \cup q_k$ of conjunctive queries $q_1,\ldots, q_k$ is not, in general, equal to the union of the consistent answers of $q_1,\ldots,q_k$.

We give a polynomial-time reduction from \cons{$Q$, $I$, $\Sigma$} to \unsat{}, where $\Sigma$ is a fixed finite set of denial constraints and $Q$ is a fixed union of non-boolean conjunctive queries. The potential answers to $Q$ are treated in the same way as the potential answers to the conjunctive query $q$ in Reduction \ref{reduction2}; to this effect, we  introduce a boolean variable for each potential answer.
The reduction we give here relies on the notions of \textit{minimal violations} and \textit{near-violations} to the set of denial constraints that we introduce next.

\begin{definition}\label{mv}
\textbf{Minimal violation.} \emph{Assume that $\Sigma$ is a set of denial constraints,  $I$ is an $\mathcal{R}$-instance, and  $S$ is a sub-instance of $I$. We say that $S$ is a \textit{minimal violation} to $\Sigma$, if $S \not\models \Sigma$  and for every set $S' \subset S$, we have that $S' \models \Sigma$.}
\end{definition}

\begin{definition}\label{nv}
\textbf{Near-violation.} \emph{Assume that $\Sigma$ is a set of denial constraints,  $I$ is an $\mathcal{R}$-instance,  $S$ is a sub-instance of $I$, and  $f$ is a fact of $I$. We say that $S$ is a \textit{near-violation} w.r.t.\ $\Sigma$ and $f$, if $S \models \Sigma$ and $S \cup \{f\}$ is a minimal violation to $\Sigma$. As a special case, if $\{f\}$ itself is a minimal violation to $\Sigma$, then we say that there is exactly one near-violation w.r.t. $f$, and it is the singleton  $\{f_{true}\}$, where $f_{true}$ is an auxiliary fact.}
\end{definition}

For a fixed finite set $\Sigma$ of denial constraints, the set of minimal violations to $\Sigma$ on a given database instance $I$ are computed as follows. The body of a denial constraint $d \in \Sigma$ is treated as a boolean conjunctive query $q_d$, possibly containing atomic formulas from $d$ that use built-in predicates such as $=$, $\neq$, $<$, $>$, $\leq$, and $\geq$, in addition to the relation symbols. The set of minimal witnesses to $q_d$ on $I$ is computed as described in Section \ref{sec:key-constraints}, which is also, precisely, the set of minimal violations to $d$. The union of the sets of minimal violations over all denial constraints in $\Sigma$ gives us the set of minimal violations to $\Sigma$. For each fact $f \in I$, the set of near-violations to $\Sigma$ w.r.t. $f$ can be obtained by removing $f$ from every minimal violation to $\Sigma$ that contains $f$.

Let $\mathcal{R}$ be a database schema, let $\Sigma$ be a fixed finite set of denial constraints on $\mathcal R$, and let $Q := q_1\cup \ldots \cup q_k$ be a union of conjunctive queries $q_1,\ldots,q_k$. Let $I$ be an $\mathcal{R}$-instance, and let $Q$ be the fixed union of $k$ non-boolean conjunctive queries $q_1, \ldots, q_k$.


\begin{reduction}\label{reduction3}\begin{sloppypar}
Given an $\mathcal R$-instance $I$, we construct a boolean formula $\phi'$ as follows.

\noindent Compute the following sets:


$\bullet$  $\mathcal{V}$: the set of minimal violations to $\Sigma$ on $I$.

$\bullet$  $\mathcal{N}^i$: the set of near-violations to $\Sigma$, on $I$, w.r.t. each fact $f_i \in I$.

$\bullet$ $\mathcal{A}$: the set of potential answers to $Q$ on $I$.

$\bullet$  $\mathcal{W}^l$: the set of all minimal witnesses to $Q[\vec{a}_l]$ on $I$, for each $\vec{a}_l \in \mathcal{A}$.


\noindent For each fact $f_i$ of $I$, introduce a boolean variable $x_i$, $1\leq i\leq n$. For the auxiliary fact $f_{true}$, introduce a constant  $x_{true} = true$.
 For each $N^i_j \in \mathcal{N}^i$, introduce a boolean variable $y^i_j$, and for each $\vec{a}_l \in \mathcal{A}$, introduce a boolean variable $p_l$.
\begin{enumerate}
\item For each $V_j \in \mathcal{V}$, construct a clause $\alpha_j = \underset{f_i \in V_j}{\lor} \neg x_i$.
\item For each $\vec{a}_l \in \mathcal{A}$ and for each $W^l_j \in \mathcal{W}^l$, construct a clause ${\beta^l_j = \bigg(\underset{f_i \in W^l_j}{\lor} \neg x_i\bigg) \lor \neg p_l}$.
\item For each $f_i \in I$, construct a clause $\gamma_i = x_i \lor \bigg(\underset{N^i_j \in \mathcal{N}^i}{\lor}y^i_j\bigg)$.
\item For each variable $y^i_j$, construct an expression $\theta^i_j = y^i_j \leftrightarrow \bigg(\underset{f_d \in N^i_j}{\land} x_d\bigg)$.
\item Construct the following boolean formula $\phi$:
$${\phi' = \bigg(\overset{|\mathcal{V}|}{\underset{i=1}{\land}}\alpha_i\bigg)\land\bigg(\overset{|\mathcal{A}|}{\underset{l=1}{\land}}\bigg(\overset{|\mathcal{W}^l|}{\underset{j=1}{\land}}\beta^l_j\bigg)\bigg) \land \bigg(\overset{|I|}{\underset{i=1}{\land}}\bigg(\Big(\overset{|\mathcal{N}^i|}{\underset{j=1}{\land}}\theta^i_j\Big)\land \gamma_i\bigg)\bigg)}$$
\end{enumerate}
\end{sloppypar}
\end{reduction}

\preservecounter{proposition}
\movetoappendix{\subsection*{Proof of  Proposition~\ref{prop3}}}
\copytoappendix{
\begin{proposition}\label{prop3} Let $\phi'$ be the boolean formula constructed using Reduction \ref{reduction3}.
\begin{itemize}
\item The formula $\phi'$ can be transformed to an equivalent CNF-formula $\phi$ whose size is polynomial in the size of $I$.
\item There exists a satisfying assignment to $\phi'$ in which a variable $p_l$ is set to 1 if and only if $\vec{a}_l \not\in \cons{Q, I, \Sigma}$.
\end{itemize}
\end{proposition}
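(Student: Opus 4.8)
The plan is to follow the same two-pronged template as Propositions \ref{prop1} and \ref{prop2}: a counting argument for the size/CNF bound, followed by a correspondence between satisfying assignments of $\phi'$ and subset repairs of $I$ for the correctness claim. For the first bullet, I would argue that every ingredient of $\phi'$ is polynomially bounded once $\Sigma$ and $Q$ are treated as fixed. Because each denial constraint in the fixed set $\Sigma$ has a bounded number of relational atoms, every minimal violation has bounded size, so $|\mathcal{V}|$ is $O(n^c)$ for a constant $c$, and each near-violation (a minimal violation with one fact removed) again has bounded size, making $\sum_i |\mathcal{N}^i|$ polynomial. The bounds on $|\mathcal{A}|$ and on each $|\mathcal{W}^l|$ are exactly as in the proof of Proposition \ref{prop2}. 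The only non-clausal part of $\phi'$ is the set of biconditionals $\theta^i_j$; since each near-violation $N^i_j$ has bounded size, the standard expansion of $y^i_j \leftrightarrow \bigwedge_{f_d \in N^i_j} x_d$ yields only a bounded number of clauses of bounded length per $\theta^i_j$. Hence $\phi'$ is equivalent to a CNF-formula $\phi$ of size polynomial in $|I|$.

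For the second bullet, the key structural fact is that consistency with respect to denial constraints is downward closed: removing facts never creates a new violation, so the subset repairs of $I$ are precisely the maximal consistent subsets of $I$, and a consistent $R \subseteq I$ is maximal iff for every $f_i \in I \setminus R$ the set $R \cup \{f_i\}$ violates $\Sigma$. The clauses are designed so that an assignment encodes a repair via $R = \{f_i : x_i = 1\}$: the $\alpha$-clauses say that no minimal violation survives in $R$ (hence $R \models \Sigma$, since every violating set contains a minimal one), while the $\gamma$-clauses together with the definitional $\theta$-clauses say that whenever $f_i \notin R$ some near-violation w.r.t.\ $f_i$ sits entirely inside $R$ (hence $R \cup \{f_i\}$ completes a minimal violation and is inconsistent, giving maximality). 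The auxiliary constant $x_{true} = \mathit{true}$ handles the degenerate case in which $\{f_i\}$ is itself a minimal violation: there the forced value $y^i_j = 1$ lets $\gamma_i$ be satisfied with $x_i = 0$, matching the fact that such an $f_i$ lies in no repair.

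With this dictionary in place I would prove the equivalence in both directions. Given $\vec{a}_l \notin \cons{Q, I, \Sigma}$, fix a repair $R$ with $R \not\models Q[\vec{a}_l]$, set $x_i = 1$ iff $f_i \in R$, let each $y^i_j$ take the value forced by $\theta^i_j$, and put $p_l = 1$ and $p_j = 0$ for $j \neq l$; the $\alpha$-clauses hold because $R \models \Sigma$, the $\gamma$-clauses hold by maximality of $R$ as above, the $\beta^l$-clauses hold because no minimal witness to $Q[\vec{a}_l]$ is contained in $R$, and all other $\beta$-clauses hold trivially since $p_j = 0$. Conversely, from a satisfying assignment with $p_l = 1$ I set $R = \{f_i : x_i = 1\}$ and read off from $\alpha$ that $R \models \Sigma$, from $\gamma$ and $\theta$ that $R$ is maximal (hence a repair), and from the satisfied $\beta^l$-clauses that every minimal witness to $Q[\vec{a}_l]$ misses a fact of $R$, so $R \not\models Q[\vec{a}_l]$ and therefore $\vec{a}_l \notin \cons{Q, I, \Sigma}$.

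I expect the main obstacle to be the correctness of the maximality encoding, namely the interplay between the $\gamma$-clauses, the $\theta$-biconditionals, and the near-violation machinery. The two facts that need careful justification are that $R \cup \{f_i\}$ is inconsistent exactly when some near-violation w.r.t.\ $f_i$ is contained in $R$, and that the $f_{true}$ convention correctly forces out facts that are single-tuple violations. Everything else, including the $\beta^l$/$p_l$ bookkeeping for the potential answers, is a direct adaptation of the argument for Proposition \ref{prop2}.
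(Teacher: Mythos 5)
Your proposal is correct and follows essentially the same route as the paper's own proof: the same constant-degree counting bounds on $\mathcal{V}$, $\mathcal{N}^i$, $\mathcal{A}$, $\mathcal{W}^l$ with the standard CNF expansion of the $\theta$-biconditionals, and the same two-directional dictionary between satisfying assignments with $p_l=1$ and repairs falsifying $Q[\vec{a}_l]$. If anything, you are more explicit than the paper in justifying the maximality encoding (downward closure of denial constraints, the equivalence between ``$R\cup\{f_i\}$ is inconsistent'' and ``some near-violation w.r.t.\ $f_i$ lies in $R$,'' and the $f_{true}$ convention), which the paper asserts without detailed argument.
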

}
\movetoappendix{
\begin{proof}
Let $n$ be the number of facts of $I$. Let $d_1$ be the smallest number such that there exists no denial constraint in $\Sigma$ whose number of database atoms is bigger than $d_1$. Also, let $d_2$ be the smallest number such that there exists no conjunctive query in $Q$ whose number of database atoms is bigger than $d_2$. Since $\Sigma$ and $Q$ are not part of the input to \cons{$Q$}, the quantities $d_1$ and $d_2$ are fixed constants.  We also have  that $|\mathcal{V}| \leq n^{d_1}$, $|\mathcal{N}^i| \leq n^{d_1}$ for $1 \leq i \leq n$, $|\mathcal{A}| \leq n^{d_2}$, and $|\mathcal{W}^l| \leq n^{d_2}$ for $1 \leq l \leq |\mathcal{A}|$. The number of $x$-, $y$-, and $p$-variables in $\phi'$ is therefore bounded by $n$, $n^{d_1 + 1}$, and $n^{d_2}$, respectively. The formula $\phi'$ contains as many $\alpha$-clauses as $|\mathcal{V}|$, and none of the $\alpha$-clause's length exceeds $n$. Similarly, there are at most $n^{d_2}$ $\beta$-clauses, and none of their lengths exceeds $d_2+1$. The number of $\gamma$-clauses is precisely $n$, and each $\gamma$-clause is at most $n^{d_1+1} + 1$ literals long. There are as many $\theta$-expressions as there are $y$-variables. Every $\theta$-expression is of the form $y \leftrightarrow (x_1 \land ... \land x_d)$, where $d$ is a constant obtained from the number of facts in the corresponding near-violation. Each $\theta$-expression can be equivalently written in a constant number of CNF-clauses as $((\neg y \lor x_1) \land ... \land (\neg y \lor x_d)) \land (\neg x_1 \lor ... \lor \neg x_d \lor y)$, in which the length each clause is constant. This makes it possible to transform $\phi'$ into an equivalent CNF-formula $\phi$, whose size is polynomial in the size of $I$.

 To prove the second part of the proposition, assume first that $\hat{a}$ is a satisfying assignment to the variables in $\phi'$ such that $\hat{a}(p_l) = 1$. Construct a database instance $R$ such that $f_i \in R$ if and only if $\hat{a}(x_i) = 1$. The $\alpha$-clauses make sure that no minimal violation to $\Sigma$ is present in $R$, meaning that $R$ is a consistent subset of $I$. The $\gamma$-clauses and the $\theta$-expressions encode the condition that, for every fact $f \in I$, either $f \in R$  or at least one near-violation w.r.t.\ $\Sigma$ and $f$ is in $R$. This condition makes sure that $R$ is indeed a repair of $I$. Since $\hat{a}(p_l) = 1$, the $\beta^l$-clauses ensure that at least one fact from each minimal witness to $Q[\vec{a}_l]$ is missing from $R$, meaning that $\vec{a}_l \not\in Q(R)$.

In the other direction, given a repair $R$ that falsifies $Q[\vec{a}_l]$, build an assignment $\hat{a}$ as follows. Set $\hat{a}(x_i) = 1$ if and only if $f_i \in r$. Set $\hat{a}(p_l) = 1$, and set $\hat{a}(p_{l'}) = 0$ for all $l' \neq l$. Since $R \models \Sigma$, no minimal violation to $\Sigma$ is a subset of $R$, meaning that $\hat{a}$ satisfies all $\alpha$-clauses in $\phi'$. Also, for every fact $f\in I$, it must be the case that either $f \in R$ or at least one near-violation w.r.t. $\Sigma$ and $f$ is in $R$ (otherwise $R$ would not have been a repair of $I$). Therefore, all $\gamma$-clauses and $\theta$-expressions are also satisfied by the assignment $\hat{a}$. Since $R \not\models Q[\vec{a}_l]$, at least one fact from each minimal witness to $Q[\vec{a}_l]$ must be missing from $R$, meaning that there is at least one variable $x_i$ in each $\beta^l$-clause such that $\hat{a}(x_i) = 0$. Hence, all $\beta^l$-clauses are satisfied by $\hat{a}$, even when $\hat{a}(p_l) = 1$. All other $\beta$-clauses are satisfied trivially, since $\hat{a}(p_{l'}) = 0$, for all $l' \neq l$.
\end{proof}
}
\begin{example}\label{example2}
Consider the database instance from Table \ref{flights}. In addition to the three key constraints from Example \ref{example1}, suppose the schema now has two additional integrity constraints: \begin{enumerate*}[label=(\alph*)]
\item if a flight departs from YYZ, then its airline must be Jazz Air;\label{egd} and
\item for Southwest airlines, if two tickets have the same code, then the ticket with an economy class must have lower fare than the one with the first class.\label{denial}
\end{enumerate*}
These can be expressed as the following denial constraints:
\begin{align*}
\text{\ref{egd} }\forall x, y, z, w, p, q\; \neg &(\textit{Flights}(x, y, z, \text{`YYZ'}, w, p, q) \land z \neq \text{`Jazz Air'})\\
\text{\ref{denial} }\forall x, y, z, w, p, q\; \neg &(\textit{Flights}(x, y, \text{`Southwest'}, z, w, p, q)\land\textit{Tickets}(r, x, \text{`First'}, t)\\
 &\land \textit{Tickets}(r', x, \text{`Economy'}, t') \land t \leq t')
\end{align*}

Let us say that we want to find the PNR numbers of the tickets booked with first class, or with Silkair airlines. This can be expressed as the union $Q:=q_1 \cup q_2$ of two unary conjunctive queries, where
\begin{align*}
q_1(x):=\; & \exists x, y, z\; \textit{Tickets}(x, y, \textit{`First'}, z)\\
q_2(x):=\; & \exists x, y, z, w, p, q, r, s, t\; \textit{Tickets}(x, y, z, w) \land \textit{Flights}(y, \textit{`Silkair'}, p, q, r, s, t)
\end{align*}

\begin{figure*}[!ht]
\caption{Minimal violations, minimal witnesses, and  near-violations in Example \ref{example2}.}
\begin{varwidth}[t]{.4\textwidth}
Minimal violations to $\Sigma$:
\begin{itemize}
\item $\{f_1, f_3\}$, $\{f_8\}$, $\{f_4, f_6, f_9\}$
\end{itemize}
Minimal witnesses to $Q$:
\begin{itemize}
\item $\{f_5\}$, $\{f_6\}$, $\{f_4, f_8\}$
\end{itemize}
\end{varwidth}
\hspace{4em}
\begin{varwidth}[t]{.6\textwidth}
Near-violations to $\Sigma$:\vspace{0.1in}\\
\begin{varwidth}[t]{.5\textwidth}
\begin{itemize}
\item $f_1$ : $\{f_3\}$
\item $f_3$ : $\{f_1\}$
\item $f_4$ : $\{f_6, f_9\}$
\item $f_6$ : $\{f_4, f_9\}$
\end{itemize}
\end{varwidth}
\begin{varwidth}[t]{.5\textwidth}
\begin{itemize}
\item $f_8$ : $\{f_{true}\}$
\item $f_9$ : $\{f_4, f_6\}$
\item $f_2, f_5, f_7$ : None
\end{itemize}
\end{varwidth}
\end{varwidth}
\label{fig1}
\end{figure*}
\begin{figure*}[!ht]
\centering
\caption{The $\alpha$-, $\beta$-, $\gamma$-clauses, and the $\theta$-expressions in Example \ref{example2}.}
\vspace{-0.2in}
\begin{align*}
\alpha\text{-clauses: } & (\neg x_1 \lor \neg x_3), (\neg x_8), (\neg x_4 \lor \neg x_6 \lor \neg x_9)\\
\beta\text{-clauses: } &(\neg x_5 \lor \neg p_1), (\neg x_6 \lor \neg p_2), (\neg x_4 \lor \neg x_8 \lor \neg p_3)\\
\gamma\text{-clauses: } & (x_1 \lor y^1_1), (x_2),(x_3 \lor y^3_1),(x_4 \lor y^4_1),(x_5), (x_6 \lor y^6_1), (x_7), (x_8 \lor y^8_1), (x_9 \lor y^9_1)\\
\theta\text{-expressions: } & (y^1_1 \leftrightarrow x_3), (y^3_1 \leftrightarrow x_1), (y^4_1 \leftrightarrow (x_6 \land x_9)), (y^6_1 \leftrightarrow (x_4 \land x_9)),  (y^8_1 \leftrightarrow x_{true}), \\
&(y^9_1 \leftrightarrow (x_4 \land x_6))
\end{align*}
\vspace{-0.2in}
\ignore{
\begin{align*}
\phi = & (\neg x_1 \lor \neg x_3) \land (\neg x_8) \land (\neg x_4 \lor \neg x_6 \lor \neg x_9) \land (\neg x_5 \lor \neg p_1) \land (\neg x_6 \lor \neg p_2)\\
 & \land (\neg x_4 \lor \neg x_8 \lor \neg p_3) \land (x_1 \lor x_3) \land (x_2) \land (x_4 \lor y^4_1) \land (x_5) \land (x_6 \lor y^6_1) \land (x_7)\\
 & \land (x_9 \lor y^9_1) \land (\neg y^4_1 \lor x_6) \land (\neg y^4_1 \lor x_9) \land (y^4_1 \lor x_6 \lor x_9) \land (\neg y^6_1 \lor x_4)\\
 & \land (\neg y^6_1 \lor x_9) \land (y^6_1 \lor x_4 \lor x_9) \land (\neg y^9_1 \lor x_4) \land (\neg y^9_1 \lor x_6) \land (y^9_1 \lor x_4 \lor x_6)
\end{align*}}
\label{fig2}
\end{figure*}

The  minimal witnesses to $Q$, the minimal violations to $\Sigma$, and the near-violations to $\Sigma$ w.r.t.\ each fact of the database are shown in Figure \ref{fig1}. With these, we construct the $\alpha$-, $\beta$-, $\gamma$-clauses, and the $\theta$-expressions of $\phi$, as shown in Figure \ref{fig2}.
 Even though, for  simplicity, it is not mentioned in Reduction \ref{reduction3},  we do the following optimization in practice: if $|N^i_j| = 1$, we do not introduce a variable $y^i_j$, but, we use the $x$-variable corresponding to the only fact in $N^i_j$. In each satisfying assignment to $\phi$, the variable $p_1$ must take the value 0. In contrast, this is not the case for $p_2$ and $p_3$. By Proposition \ref{prop3}, `KLF88V' is a consistent answer to $Q$, but `MJ9C8R' and `NJ5RT3' are not.
\end{example}

\section{Computing Consistent Answers via \wmaxsat{}}
By Proposition \ref{prop1}, the consistent answer to a boolean conjunctive query over a schema $\mathcal R$ with primary key constraints can be computed by solving the \unsat{} instance constructed in Reduction \ref{reduction1}. For non-boolean queries, however, in a CNF-formula $\phi$ constructed using Reduction \ref{reduction2} or \ref{reduction3}, one needs to identify each variable $p_l$ such that there exists at least one satisfying assignment to $\phi$ in which $p_l$ gets set to 1. By Proposition \ref{prop3}, the corresponding potential answers can then be discarded for being inconsistent. One way to do this is as follows. Add a clause $(p_1 \lor ... \lor p_{|A|})$ to $\phi$, and solve $\phi$ using a \sat{} solver. For each $p_l$ that gets set to 1 in the solution of $\phi$, remove  the literal $p_l$ from $\phi$  and then solve $\phi$ again. Repeat this process until $\phi$ is no longer satisfiable. At the end of this iterative process, the potential answers corresponding to the $p$-variables that still occur positively in $\phi$ are precisely the consistent answers to $Q$ on $I$. This approach, however, requires many \sat{} instances to be solved when the number of potential answers is large. For this reason, we developed and tested a different method  that uses solving \wmaxsat{} instances. The construction of these \wmaxsat{} instances is described  in Reduction \ref{reduction4}.
\begin{reduction}\label{reduction4}
Let the setup be the same as that of Reduction \ref{reduction2} (or Reduction \ref{reduction3}).
\begin{enumerate}
\item Construct a CNF-formula $\phi$ using Reduction \ref{reduction2} (or Reduction \ref{reduction3}).
\item Make all clauses in $\phi$ hard.
\item For each $\vec{a}_l \in \mathcal{A}$, construct a unit $\epsilon$-clause $\epsilon_l = (p_l)$.
\item Make all $\epsilon$-clauses soft, and of equal weights.
\item Construct the WCNF-formula $\psi = \phi \land \bigg(\overset{|\mathcal{A}|}{\underset{l=1}{\land}}\epsilon_l\bigg)$.
\end{enumerate}
\end{reduction}
\vspace{-0.13in}
\begin{algorithm}[!ht]
\caption{Eliminating Inconsistent Potential Answers}\label{alg:eliminate}
\begin{algorithmic}[1]
\Procedure{EliminateWithMaxSAT}{$\psi, \mathcal{A}$}
\State \textbf{let} \textsc{Ans} $=$ \textbf{bool array}$[|\mathcal{A}|]$
\For {$l = 1$ to $|\mathcal{A}|$}
\State \textsc{Ans}[$l$] $\gets$ true
\EndFor\State \textbf{let bool} \textit{moreAnswers} $\gets$ true
\While{\textit{moreAnswers}}
\State \textit{moreAnswers} $\gets$ false
\State \textbf{let} \textit{opt} $\gets$ \maxsat{$(\psi)$}\Comment{Use \wmaxsat{} solver}
\For {$l=1$ to $|\mathcal{A}|$}
\If{\textit{opt}[$p_l$] = 1}
\State \textit{moreAnswers} $\gets$ true
\State \textsc{Ans}[$l$] $\gets$ false
\State Remove the unit clause $(p_l)$ from $\psi$
\State Remove all clauses containing the literal $\neg p_l$ from $\psi$
\State Add a new unit hard clause $(\neg p_l)$ to $\psi$
\EndIf
\EndFor
\EndWhile
\State \textbf{return} \textsc{Ans}
\EndProcedure
\end{algorithmic}
\end{algorithm}

 The preceding Algorithm \ref{alg:eliminate}  computes the consistent answers by iteratively solving \wmaxsat{} instances. It takes as inputs the instance $\psi$ constructed using Reduction \ref{reduction4}  and the set $\mathcal A$ of potential answers. The idea is to eliminate, in each iteration, as many inconsistent answers from $\mathcal{A}$ as possible  by solving $\psi$. After each iteration, $\psi$ is modified in such a way that additional inconsistent answers, if any,  can be eliminated in  subsequent iterations.
 In  Section \ref{experiments}, we carried out experiments in which it turned out that  the number of iterations taken by Algorithm \ref{alg:eliminate} is less than 4, even when there is a large number of potential answers.


\movetoappendix{\subsection*{Proof of Proposition~\ref{prop4}}}
\movetoappendix{
We first state and prove Lemma \ref{lemma1}, that reasons about the satisfying assignments to the WCNF-formula $\psi$, constructed using Reduction \ref{reduction4}. This lemma is used in proving Proposition \ref{prop4}.
\begin{lemma}\label{lemma1}
Let $\phi'$ be the CNF-formula constructed in Step 1 of Reduction \ref{reduction4}, and $\psi_i$ be the \wmaxsat{} instance at the beginning of $i^{th}$ iteration of Algorithm \ref{alg:eliminate}. For all $i$, every optimal solution of $\psi_i$ satisfies all clauses in $\phi'$.
\end{lemma}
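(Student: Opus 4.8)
The plan is to exploit the fact that in each $\psi_i$ the clauses coming from $\phi'$ are declared \emph{hard}, so by the very definition of weighted partial MaxSAT every optimal solution of $\psi_i$ satisfies all hard clauses, \emph{provided} the hard part of $\psi_i$ is satisfiable at all. Thus the whole lemma reduces to two sub-tasks: (i) showing that the set of hard clauses of each $\psi_i$ is satisfiable, and (ii) accounting for those clauses of $\phi'$ that Algorithm~\ref{alg:eliminate} has already deleted from $\psi_i$ by the time iteration $i$ begins.

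First I would pin down exactly how $\psi_i$ differs from $\phi'$. Let $D$ be the set of indices $l$ for which $p_l$ has been discarded during iterations $1,\dots,i-1$. Reading off the \textbf{while}-loop, $\psi_i$ is obtained from $\phi'$ (with its clauses kept hard and the soft units $(p_l)$ added) by performing, for each $l\in D$, three edits: delete the soft unit $(p_l)$, delete every clause containing the literal $\neg p_l$ (these are exactly the $\beta^l$-clauses), and insert the hard unit clause $(\neg p_l)$.

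Second, I would produce a single satisfying assignment of the hard part that works uniformly for every $i$. Take any repair $R$ of $I$ (a repair always exists, e.g.\ a maximal consistent subset of $I$), set $\hat a(x_i)=1$ iff $f_i\in R$, set the auxiliary $y$-variables of Reduction~\ref{reduction3} to the truth values dictated by their defining $\theta$-expressions, and set $\hat a(p_l)=0$ for \emph{every} $l$. Exactly as in the proofs of Propositions~\ref{prop2} and \ref{prop3}, the $\alpha$- (and, for Reduction~\ref{reduction3}, the $\gamma$- and $\theta$-) clauses hold because $R$ is a repair, and every $\beta$-clause holds because its literal $\neg p_l$ is true. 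Moreover each inserted hard unit $(\neg p_l)$ with $l\in D$ holds since $\hat a(p_l)=0$. Hence the hard part of $\psi_i$ is satisfiable for every $i$, so optimal solutions exist and every one of them satisfies all hard clauses.

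Finally I would close the gap left by the deleted clauses. Let $\hat b$ be an arbitrary optimal solution of $\psi_i$. Every clause of $\phi'$ that is still present in $\psi_i$ is hard, hence satisfied by $\hat b$. For a clause of $\phi'$ that was deleted — necessarily a $\beta^l$-clause with $l\in D$ — the instance $\psi_i$ still contains the hard unit $(\neg p_l)$, which forces $\hat b(p_l)=0$; since that $\beta^l$-clause contains the literal $\neg p_l$, it is satisfied by $\hat b$ as well. Therefore $\hat b$ satisfies every clause of $\phi'$, which is the claim. The main obstacle is sub-task (i): one must observe that the deletions never destroy satisfiability of the hard part and exhibit the single ``any repair with all $p_l=0$'' witness that certifies this simultaneously for all $i$; once that is in hand, the rest is just the definition of MaxSAT together with the remark that the hard unit $(\neg p_l)$ re-satisfies precisely the $\beta^l$-clauses that were removed.
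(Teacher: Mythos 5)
Your proof is correct, and it actually differs from the paper's in ways worth noting. The paper proves the lemma by induction on $i$: the base case exhibits the satisfying assignment ``all $x$-variables to 1, all $p$-variables to 0,'' and the inductive step observes that $\psi_{i+1}$ adds hard units $(\neg p_l)$ and that no $p_l$ occurs positively in $\phi'$, so optimal solutions still satisfy $\phi'$. You instead give a direct, non-inductive argument: characterize exactly how $\psi_i$ differs from $\phi'$ (soft units $(p_l)$ deleted, $\beta^l$-clauses deleted, hard units $(\neg p_l)$ inserted), exhibit a single repair-based witness with all $p_l=0$ that satisfies the hard part of every $\psi_i$ simultaneously, and then recover the deleted $\beta^l$-clauses via the forced $\hat b(p_l)=0$. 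Your route buys two genuine improvements in rigor. First, your witness works for both instantiations of Step 1: the paper's all-ones assignment satisfies $\phi'$ only when it comes from Reduction~\ref{reduction2}, whereas under Reduction~\ref{reduction3} the $\alpha$-clauses are disjunctions of \emph{negative} literals $\neg x_i$ over minimal violations, so all-ones falsifies them on any inconsistent instance; your ``take any repair $R$, set $x_i=1$ iff $f_i\in R$, set $y$-values by the $\theta$-expressions'' handles both cases, mirroring the argument in Proposition~\ref{prop3}. Second, you explicitly account for the clauses of $\phi'$ that Algorithm~\ref{alg:eliminate} \emph{removes} (line 14 deletes all clauses containing $\neg p_l$, which are precisely $\beta^l$-clauses of $\phi'$); the paper's inductive step describes $\psi_{i+1}$ as obtained merely ``by adding'' the hard units and never addresses how optimal solutions of $\psi_{i+1}$ satisfy clauses no longer present, a gap your final step closes cleanly by noting that the inserted hard unit $(\neg p_l)$ forces exactly the literal that re-satisfies each removed clause. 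The paper's induction is shorter; yours is the more complete argument, and the only point to keep explicit (which you do) is that a repair always exists, which holds for denial constraints since any maximal consistent subset of $I$ is a subset repair.
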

\begin{proof}
We prove Lemma \ref{lemma1} by induction on $i$. The CNF-formula $\phi'$ constructed in Step 1 of Reduction \ref{reduction4} can always be satisfied by setting all $x$-variables to 1, and all $p$-variables to 0. The clauses in $\phi$ being hard, $\epsilon$ being soft ensures that every optimal solution of $\psi_0$ satisfies all clauses in $\phi$. Assume that for some $i \geq 0$, every optimal solution to $\psi_i$ satisfies all clauses in $\phi$. At the end of iteration $i+1$, if \textit{moreAnswers} is true, the formula $\psi_{i+1}$ is constructed from $\psi_i$ by adding to $\psi_i$ the unit hard clauses $(\neg p_l)$. This forces every optimal solution of $\psi_{i+1}$ to satisfy all of these added clauses. Since no $p_l$ variable occurs positively in $\phi$, we have that, for all $i$, every optimal solution to $\psi_{i+1}$ still satisfies $\phi$.
\end{proof}
}
\copytoappendix{
\begin{proposition}\label{prop4} Algorithm \ref{alg:eliminate} returns an array \textsc{Ans} such that $\vec{a}_l \in \cons{Q, I}$ if and only if the entry \textsc{Ans[$l$]} is true.
\end{proposition}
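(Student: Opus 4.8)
The plan is to prove the two directions of the biconditional separately, using as the main tools the characterization of inconsistent answers from Proposition \ref{prop2} (respectively, Proposition \ref{prop3}) --- namely, that $\vec{a}_l \notin \cons{Q,I}$ exactly when the hard part $\phi$ admits a satisfying assignment with $p_l = 1$ --- together with Lemma \ref{lemma1}, which guarantees that every optimal solution of the current \wmaxsat{} instance $\psi_i$ satisfies all clauses of $\phi$. The structural observation I will exploit throughout is that in $\phi$ each variable $p_l$ occurs \emph{only negatively}, since it appears solely in the literals $\neg p_l$ of the $\beta$-clauses; hence flipping any $p$-variable from $1$ to $0$ can never falsify a clause of $\phi$. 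Because the soft clauses are the equal-weight unit clauses $\epsilon_l = (p_l)$, the \wmaxsat{} objective is precisely to satisfy all hard clauses while setting as many of the currently soft $p_l$ to $1$ as possible.

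First I would establish \emph{soundness}: whenever the algorithm sets \textsc{Ans}$[l]$ to false, it does so because the optimal solution \textit{opt} of some $\psi_i$ has \textit{opt}$[p_l] = 1$. By Lemma \ref{lemma1} this \textit{opt} satisfies every clause of $\phi$, so it is a satisfying assignment of $\phi$ with $p_l = 1$, and by Proposition \ref{prop3} we conclude $\vec{a}_l \notin \cons{Q,I}$. Thus every discarded answer is genuinely inconsistent, which is the contrapositive of one direction: if $\vec{a}_l \in \cons{Q,I}$, then \textsc{Ans}$[l]$ stays true. I would also note that deleting the $\beta$-clauses containing $\neg p_l$ after adding the hard unit clause $(\neg p_l)$ is harmless, since those clauses are automatically satisfied once $p_l$ is forced to $0$; this keeps the hard part of every $\psi_i$ logically equivalent to $\phi$ conjoined with the accumulated unit clauses $(\neg p_{l'})$, exactly as Lemma \ref{lemma1} presupposes.

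The heart of the argument is the \emph{completeness} direction, and I expect the main obstacle to be reconciling the \emph{local} witness supplied by Proposition \ref{prop3} (which only guarantees \emph{some} assignment with $p_l = 1$, possibly also setting other $p$-variables to $1$ and thereby violating the accumulated hard clauses $(\neg p_{l'})$) with the \emph{global} constraints the algorithm has built up. The loop halts exactly when an iteration leaves \textit{moreAnswers} false, i.e.\ when the optimal solution of the current $\psi$ sets every $p_l$ to $0$; since the objective maximizes the number of satisfied soft units, the optimum value is then $0$, so \emph{no} assignment satisfying the current hard clauses sets any still-soft $p_l$ to $1$. Suppose toward a contradiction that some $\vec{a}_l \notin \cons{Q,I}$ has \textsc{Ans}$[l]$ still true at termination. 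By Proposition \ref{prop3} there is $\hat{a} \models \phi$ with $\hat{a}(p_l) = 1$. Using the negative-occurrence observation, I overwrite $\hat{a}$ to set $p_{l'} = 0$ for every $l' \neq l$ while leaving the $x$-variables and $p_l$ unchanged; the result still satisfies $\phi$ and, in particular, satisfies all accumulated clauses $(\neg p_{l'})$. This is a satisfying assignment of the current hard clauses with $p_l = 1$, contradicting optimum value $0$. Hence every inconsistent $\vec{a}_l$ is eventually fixed, giving \textsc{Ans}$[l] = \text{false}$.

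Finally, I would record that the algorithm terminates: each productive iteration (one that sets \textit{moreAnswers} to true) adds at least one new hard clause $(\neg p_l)$ and permanently fixes that index, and there are only $|\mathcal{A}|$ indices, so after at most $|\mathcal{A}|$ productive iterations every remaining optimum sets all $p_l$ to $0$ and the loop exits. Combining soundness and completeness yields that \textsc{Ans}$[l]$ is true if and only if $\vec{a}_l \in \cons{Q,I}$, as claimed.
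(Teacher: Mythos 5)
Your proof is correct and follows essentially the same route as the paper's: soundness via Lemma~\ref{lemma1} together with Proposition~\ref{prop3}, termination by counting the indices permanently fixed in each productive iteration, and completeness via the key observation that every $p$-variable occurs only negatively in $\phi$. If anything, you are slightly more careful than the paper at two points---you note explicitly that removing the $\beta$-clauses containing $\neg p_l$ keeps the hard part of each $\psi_i$ logically equivalent to $\phi$ conjoined with the accumulated unit clauses $(\neg p_{l'})$ (a detail the paper's Lemma~\ref{lemma1} glosses over), and you make explicit the construction of flipping all other $p$-variables to $0$ where the paper merely asserts that adding the hard clauses $(\neg p_{l'})$ ``does not suppress'' the relevant satisfying assignments.
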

}
\movetoappendix{
\begin{proof}
In one direction, for every $l$, if $\vec{a}_l \in \cons{Q, I}$, then, by Proposition \ref{prop3}, the variable $p_l$ takes value 0 in every  assignment that satisfies $\phi$. By Lemma \ref{lemma1}, for every $i$, the optimal solution of $\psi_i$ also assigns value 0 to the variable $p_l$. As a result, Line 12 never gets executed, and the entry \textsc{Ans}[$l$] remains true.

For the other direction, we first prove that Algorithm \ref{alg:eliminate} always terminates. Observe that at the end of the $i^{th}$ iteration, for every $l$, a unit clause $(p_l)$ is present in $\psi_i$ if and only if \textsc{Ans}[$l$] is true. Hence, at the end of $i^{th}$ iteration, if \textit{moreAnswers} is true, then the optimal solution to $\psi_i$ must have assigned value 1 to at least one variable $p_l$ such that \textsc{Ans}[$l$] was previously true. Therefore, at the end of $i^{th}$ iteration, at least $i$ entries in \textsc{Ans} are false. It follows  that the algorithm terminates after at most $|\mathcal {A}|$ iterations.

Since no clause in $\phi$ contains a positive literal $p_l$, the addition of a unit hard clause $(\neg p_l)$ to $\psi_i$ does not suppress any satisfying assignments to $\phi$ while finding the optimal solution to $\psi_{i+1}$. Therefore, in every iteration, the optimal solution of $\psi$ guarantees to satisfy the maximum number of $p_l$ variables for which the unit clause $(p_l)$ is still in $\psi$. As a result, Algorithm \ref{alg:eliminate} does not terminate until it marks the entries \textsc{Ans}[$l$] false, for all $l$, for which there exists a satisfying assignment to $\phi$ in which $p_l$ gets assigned to 1. In other words, by Proposition \ref{prop3}, for every inconsistent answer $\vec{a}_l$, the entry \textsc{Ans} gets marked as false.
\end{proof}
}
\section{Preliminary Experimental Results}\label{experiments}
\looseness = -1 We evaluated the performance of CAvSAT using two different scenarios. First, we experimented with large synthetically generated databases having primary key constraints. We implemented Reduction \ref{reduction2} without the optimization mentioned in Section \ref{optimizations}. We also implement Reduction \ref{reduction4} and Algorithm \ref{alg:eliminate}. We found out that for seven non-boolean FO-rewritable queries that were also used in \cite{Kolaitis13}, CAvSAT significantly outperformed the database evaluation of the FO-rewritings obtained using the algorithm from \cite{Koutris17}. We also implemented Reduction \ref{reduction2} with the optimization, and evaluated its performance on fourteen additional conjunctive queries whose consistent answers are coNP-complete or are in P but are not FO-rewitable. In the second scenario, we evaluated the performance of CAvSAT using Reduction \ref{reduction3} on a real-world database with functional dependencies.
The definitions of the queries used in the experiments, as well as the FO-rewritings of the first seven queries, can be found in the Appendix.

\medskip
\noindent{\bf Experimental Setup}
All experiments were carried out on a machine running on Intel Core i7 2.7 GHz, 64 bit Ubuntu 16.04, with 8GB of RAM. We used PostgreSQL 10.1 as an underlying DBMS, and MaxHS v3.0 solver \cite{Davies2011} for solving the \wmaxsat{} instances. Our system is implemented in Java 9.04.
\subsection{Synthetic Data Generation}
The synthetic data were generated in two phases: (a) generation of consistent data;   (b) injection of inconsistency into  consistent data. The parameters used to generate the data were the number of tuples per relation (\textit{rSize}), degree of inconsistency (\textit{inDeg}), and the size of each key-equal group (\textit{kSize}). \\
\textbf{Generating consistent data} Each relation in the consistent database was generated with the same number of tuples, so that injecting inconsistency with specified \textit{kSize} and \textit{inDeg} will make the total number of tuples in the relation equal to \textit{rSize}. For each query used in the experiment, the data was generated so the evaluation of the query on the consistent database results in a relation that has the size 15\% to 20\% of \textit{rSize}. The values of the third attribute in all of the ternary relations, were chosen from a uniform distribution in the range [1, \textit{rSize}/10]. This was done to simulate a reasonably large number of potential answers. The remaining attributes take values from randomly generated alphanumeric strings of length 10.\\
\textbf{Injecting inconsistency} In each relation, the inconsistency was injected by inserting new tuples to the consistent data, that share the values of the key attributes with some already existing tuples from the consistent data. The parameter \textit{inDeg} denotes in percentage the number of tuples per relation, that participate in a key violation. We conducted experiments with the varying values for \textit{inDeg}, ranging from 5\% to 15\%. The values of \textit{kSize} were uniformly distributed between 2 to 5. The non-key attributes of the newly injected tuples were uniform random alphanumeric strings of length 10.
\subsection{Experimental Results}
\noindent{\bf CAvSAT on FO-rewritable Queries }
In this set of experiments, we compare the performance of CAvSAT against the FO-rewritings of seven queries over the database with primary key constraints. For queries $q_1, \dots, q_7$, we computed the FO-rewritings using the algorithm of Koutris and Wijsen in \cite{Koutris17}. We refer to these FO-rewritings as \textit{KW-FO-rewritings}. Since the queries $q_1, \dots, q_4$ happen to be in the class $C_\textit{forest}$, we computed additional FO-rewritings for them using the algorithm implemented in the consistent query answering system ConQuer  \cite{Fuxman05};  we refer to these rewritings as \textit{ConQuer-FO-rewritings}. 
Each FO-rewriting was translated into SQL, and fed to PostgreSQL.

Table \ref{formula-size} shows the size of the WCNF-formulas produced by Reduction \ref{reduction4} without optimization (where Reduction \ref{reduction2} is used inside Reduction \ref{reduction4}), on these queries over the databases having one million tuples per relation. Figure \ref{unopt} shows the evaluation time of CAvSAT with these formulas using MaxHS v3.0 solver \cite{Davies2011}. The letter E denotes the time required for encoding the problem into a \wmaxsat{} instance, and the letter S denotes the time taken by Algorithm \ref{alg:eliminate}. The percentage adjacent to the letters S and E denotes the degree of inconsistency. Figure \ref{opt-vs-unopt} (left) shows the significant gain in performance due to the optimization, for databases of size one million tuples per relation and with 10\% inconsistency. This is not surprising; since 90\% of the data were consistent, it is expected that most of the consistent answers lie in the consistent part of the database. Table \ref{formula-size-optimized} shows the size of the WCNF-formulas produced by Reduction \ref{reduction4}, with the optimization in place.

Figure \ref{fo-vs-sat} (right) shows that for the queries $q_1, \dots, q_4$ in the class $C_\textit{forest}$, the performance of CAvSAT is slightly worse, but comparable, to their ConQuer-FO-rewritings. For all seven queries $q_1, \dots, q_7$, however, CAvSAT significantly outperformed their KW-FO-rewritings, as PostgreSQL hit the two hours timeout while evaluating each KW-FO-rewriting. In fact, this timeout was hit by all seven queries even for databases of size as small as 100K tuples per relation.
  For $q_1, \dots, q_7$, the average number of iterations taken by Algorithm \ref{alg:eliminate} to eliminate all inconsistent potential answers was 2.85.

\begin{table}[ht]
\centering
  \begin{varwidth}[t]{0.4\linewidth}
  \caption{Size of CNF-formula}
  \begin{center}
  \renewcommand{\arraystretch}{1.1}
\setlength{\tabcolsep}{3pt}
    \begin{tabular}[h]{|c|c|c|}\hline
    \centering
Query & Variables & Clauses
\\\hline\hline
$q_1$ & 2.08M & 2.18M \\\hline
$q_2$ & 2.07M & 2.12M\\\hline
$q_3$ & 3.15M & 3.07M\\\hline
$q_4$ & 3.23M & 3.07M\\\hline
$q_5$ & 2.07M & 2.12M \\\hline
$q_6$ & 3.3M & 3.06M\\\hline
$q_7$ & 3.25M & 3.06M \\\hline
\end{tabular}\label{formula-size}
\end{center}
  \end{varwidth}
  \begin{minipage}[t]{0.59\linewidth}

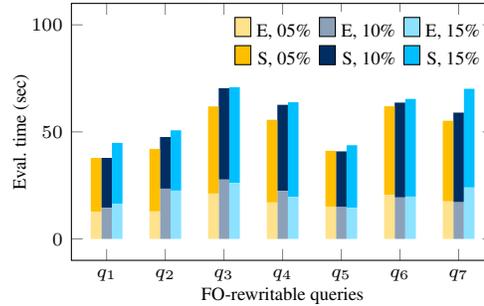
\captionof{figure}{Evaluation time of CAvSAT without optimization, for 1M tuples/relation.}
    \begin{tikzpicture}
\begin{axis}[ybar stacked,
		width=\textwidth,height=5cm,
		xlabel= FO-rewritable queries,
		label style={font=\scriptsize},
		xtick=data,
		xticklabels={$q_1$,$q_2$,$q_3$,$q_4$,$q_5$,$q_6$,$q_7$},
		legend style={at={(0.8,0.97)},font=\scriptsize,draw=none},
		legend cell align = {right},
		bar width=4pt,
		ymin=0, ymax=100,
		enlargelimits=0.1,
		y label style={at={(0.1,0.5)}},
		x label style={at={(0.5,0.08)}},
		xmin = 1,xmax=7,
		ylabel=Eval. time (sec)]
	\addplot[fill=coolblack!45!white,draw=none] coordinates {
		(1,14.532)
		(2,23.493)
		(3,27.736)
		(4,22.455)
		(5,14.87)
		(6,19.376)
		(7,17.129)
	};\addlegendentry{E, 10\%}
	\addplot[fill=coolblack, draw=none] coordinates {
		(1,23.328)
		(2,24.103)
		(3,42.657)
		(4,40.179)
		(5,26.027)
		(6,44.297)
		(7,41.870)
	};\addlegendentry{S, 10\%}
	\end{axis}
	
	\begin{axis}[ybar stacked, bar shift=-4pt,
		width=\textwidth,height=5cm,
		legend style={at={(0.6,0.97)},font=\scriptsize,draw=none,fill=none},
		legend cell align = {right},
		xticklabels=none,axis lines=none, xtick=\empty, ytick=\empty,
		bar width=4pt,
		ymin=0, ymax=100,
		enlargelimits=0.1,
		xmin = 1,xmax=7]
	\addplot[fill=amber!45!white,draw=none] coordinates {
		(1,12.598)
		(2,12.844)
		(3,21.19)
		(4,17.03)
		(5,15.13)
		(6,20.524)
		(7,17.527)
	};\addlegendentry{E, 05\%}
	\addplot[fill=amber, draw=none] coordinates {
		(1,25.237)
		(2,29.21)
		(3,40.704)
		(4,38.609)
		(5,26.015)
		(6,41.44)
		(7,37.66)
	};\addlegendentry{S, 05\%}
	\end{axis}
	
	\begin{axis}[ybar stacked, bar shift=4pt,
		width=\textwidth,height=5cm,
		legend style={at={(1,0.97)},font=\scriptsize,draw=none,fill=none},
		legend cell align = {right},
		xticklabels=none,axis lines=none, xtick=\empty, ytick=\empty,
		bar width=4pt,
		ymin=0, ymax=100,
		enlargelimits=0.1,
		xmin = 1,xmax=7]
	\addplot[fill=capri!45!white,draw=none] coordinates {
		(1,16.45)
		(2,22.583)
		(3,26.13)
		(4,19.581)
		(5,14.59)
		(6,19.777)
		(7,24.022)
	};\addlegendentry{E, 15\%}
	\addplot[fill=capri, draw=none] coordinates {
		(1,28.412)
		(2,28.15)
		(3,44.713)
		(4,44.265)
		(5,29.185)
		(6,45.624)
		(7,46.106)
	};\addlegendentry{S, 15\%}
	\end{axis}
\end{tikzpicture}
    \label{unopt}
  \end{minipage}
\end{table}
\vspace{-4em}
\begin{figure*}[h]
\caption{Evaluation time of CAvSAT with and without optimization (left). Evaluation time of CAvSAT with optimization, in comparison with the KW-FO-rewriting and the ConQuer-FO-rewriting, for 1M tuples/relation with 10\% inconsistency (right).}
\begin{varwidth}[t]{.55\textwidth}
\centering
\begin{tikzpicture}
\begin{axis}[ybar stacked,bar shift =-2.5pt,
		width=\textwidth,height=5cm,
		xlabel= FO-rewritable queries,
		label style={font=\scriptsize},
		xtick=data,
		xticklabels={$q_1$,$q_2$,$q_3$,$q_4$,$q_5$,$q_6$,$q_7$},
		legend style={at={(0.8,0.97)},font=\scriptsize,draw=none,fill=none},
		legend cell align = {left},
		bar width=4pt,
		ymin=0, ymax=80,
		enlargelimits=0.1,
		y label style={at={(0.1,0.5)}},
		x label style={at={(0.5,0.08)}},
		xmin = 1,xmax=7,
		ylabel=Eval. time (sec)]
	\addplot[fill=asparagus!50!white,draw=none] coordinates {
		(1,14.532)
		(2,23.493)
		(3,27.736)
		(4,22.455)
		(5,14.87)
		(6,19.376)
		(7,17.129)
	};\addlegendentry{E, Unopt}
	\addplot[fill=asparagus, draw=none] coordinates {
		(1,23.328)
		(2,24.103)
		(3,42.657)
		(4,40.179)
		(5,26.027)
		(6,44.297)
		(7,41.870)
	};\addlegendentry{S, Unopt}
	\end{axis}

\begin{axis}[ybar stacked, bar shift = 2.5pt,
		width=\textwidth,height=5cm,
		xlabel= \empty,ylabel=\empty,yticklabels=\empty,
		label style={font=\scriptsize},
		xtick=data,
		xticklabels=\empty,
		legend style={at={(1,0.97)},font=\scriptsize,draw=none,fill=none},
		legend cell align = {left},
		bar width=4pt,
		ymin=0, ymax=80,
		enlargelimits=0.1,
		xmin = 1,xmax=7,
		ylabel=\empty]
	\addplot[fill=coolblack!30!white,draw=none] coordinates {
		(1,7.265)
		(2,8.205)
		(3,8.855)
		(4,9.644)
		(5,7.707)
		(6,4.46)
		(7,5.27)
	};\addlegendentry{E, Opt}
	\addplot[fill=coolblack, draw=none] coordinates {
		(1,0.478)
		(2,1.013)
		(3,0.591)
		(4,1.427)
		(5,0.306)
		(6,0.499)
		(7,0.456)
	};\addlegendentry{S, Opt}
	\end{axis}
\end{tikzpicture}
\label{opt-vs-unopt}
\end{varwidth}
\hspace{1em}
\begin{varwidth}[t]{.45\textwidth}
\begin{tikzpicture}
	\begin{axis}[
		width=\textwidth,height=5cm,
		xlabel=FO-rewritable queries,
		xtick=data, ytick={0,5,10,15,20},scaled y ticks = false,
		ymin=5, ymax=30,
		enlargelimits=0.15,
		legend style={at={(0.85,0.95)},font=\scriptsize,draw=none,fill=none},
		legend cell align = {left},
		x label style={at={(0.5,0.08)}},
		label style={font=\scriptsize},
		xticklabels={$q_1$,$q_2$,$q_3$,$q_4$,$q_5$,$q_6$,$q_7$},
		yticklabels={0,5,10,15,2hr+},y label style={at={(0.12,0.5)}},
		xmin = 1,xmax=7,every axis plot/.append style={thick},
		ylabel=Eval. time (sec)]
	\addplot[color=red,mark=square] coordinates {
		(1,20)
		(2,20)
		(3,20)
		(4,20)
		(5,20)
		(6,20)
		(7,20)
	};
	\addlegendentry{KW-FO-rewriting}
	\addplot[color=coolblack,mark=o] coordinates {
	(1,7.743)
		(2,9.218)
		(3,9.476)
		(4,11.07)
		(5,8.013)
		(6,4.959)
		(7,5.726)
	};
	\addlegendentry{CavSAT}
	\addplot[color=capri,mark=triangle] coordinates {
		(1,6.553)
		(2,6.826)
		(3,9.107)
		(4,9.528)
	};
	\addlegendentry{ConQuer-FO-rewriting}
	\end{axis}
\end{tikzpicture}
\end{varwidth}
\label{fo-vs-sat}
\end{figure*}
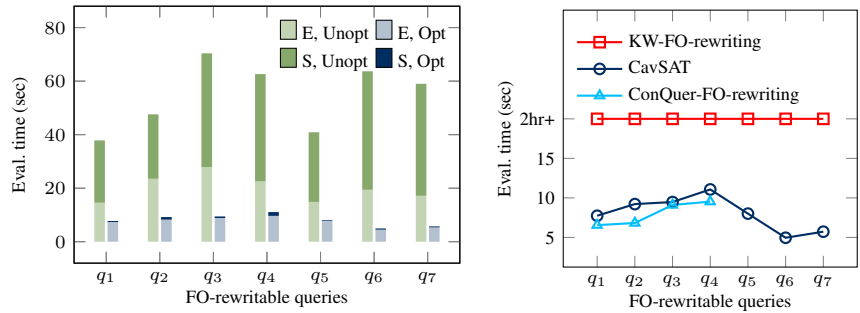
\medskip
\noindent {\bf CAvSAT on Harder Queries}
In this set of experiments, we considered fourteen additional non-boolean conjunctive queries whose consistent answers are coNP-complete or in P but not FO-rewritable (Figure \ref{fo-queries} in the Appendix). Figure \ref{hard-queries} shows that the time required for the optimizing and then constructing the \wmaxsat{} instance using Reduction \ref{reduction4}, dominates over the time taken by Algorithm \ref{alg:eliminate}. The solver takes comparatively more time for the queries that have more free variables or more atoms. Table \ref{formula-size-optimized} shows the size of the CNF-formulas constructed by Reduction \ref{reduction4} (where Reduction \ref{reduction2} is used inside Reduction \ref{reduction4}) in this experiment. The average number of iterations taken by Algorithm \ref{alg:eliminate} to eliminate all inconsistent potential answers to a query was 3.2.
\vspace{-0.4in}
\begin{table}[ht]
\caption{The size of the CNF-formulas with optimization, for 1M tuples per relation.}
\renewcommand{\arraystretch}{1.1}
\setlength{\tabcolsep}{3pt}
\centering
\begin{tabular}[h]{|c|c|c||c|c|c||c|c|c|}\hline
Query & Variables & Clauses & Query & Variables & Clauses & Query & Variables & Clauses
\\\hline\hline
$q_1$ & 16.5K & 20.9K & $q_8$ & 16.6K & 16.8K & $q_{15}$ & 14.9K & 15K\\\hline
$q_2$ & 68.6K& 76.0K & $q_9$ & 58K & 57.7K& $q_{16}$ & 58.8K & 58.4K \\\hline
$q_3$ &31.9K &36.8K & $q_{10}$ & 31.3K & 36.6K & $q_{17}$ & 40.1K & 41.4K\\\hline
$q_4$ & 117.2K& 123.7K& $q_{11}$ & 105K & 118.1K & $q_{18}$ & 107.5K & 121.4K\\\hline
$q_5$ & 16.3K&20.6K& $q_{12}$ & 116.8K & 123.4K & $q_{19}$ & 114.4K & 120.7K \\\hline
$q_6$ & 32.8K & 33.2K & $q_{13}$ & 63.2K & 65.7K & $q_{20}$ & 53.4K & 63.7K\\\hline
$q_7$ &32.5K & 33.8K & $q_{14}$ & 53.9K & 59.2K & $q_{21}$ & 170K & 199K\\\hline
\end{tabular}\label{formula-size-optimized}
\end{table}
\vspace{-0.3in}
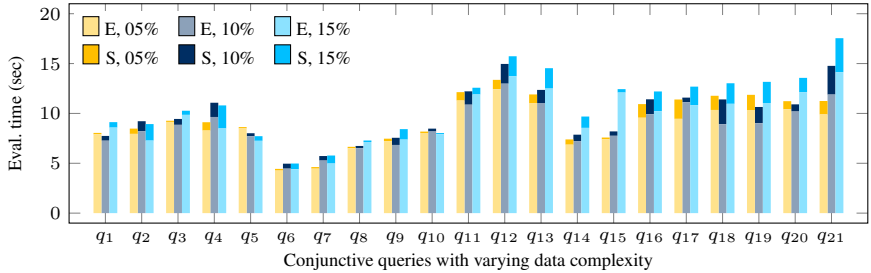
\begin{figure*}[h]
\caption{Evaluation time of CAvSAT for conjunctive queries with varying data complexity, with optimization, over the databases of size 1M tuples/relation.}
\begin{tikzpicture}
\begin{axis}[ybar stacked,
		width=\textwidth,height=4.5cm,
		xlabel= Conjunctive queries with varying data complexity,
		label style={font=\scriptsize},
		xtick=data,
		xticklabels={$q_1$,$q_2$,$q_3$,$q_4$,$q_5$,$q_6$,$q_7$,$q_8$,$q_9$,$q_{10}$,$q_{11}$,$q_{12}$,$q_{13}$,$q_{14}$,$q_{15}$,$q_{16}$,$q_{17}$,$q_{18}$,$q_{19}$,$q_{20}$,$q_{21}$},
		legend style={at={(0.25,0.97)},font=\scriptsize,draw=none},
		legend cell align = {right},
		bar width=3pt,
		ymin=0, ymax=20,
		enlargelimits=0.05,
		y label style={at={(0.05,0.5)}},
		x label style={at={(0.5,0.08)}},
		xmin = 1,xmax=21,
		ylabel=Eval. time (sec)]
	\addplot[fill=coolblack!45!white,draw=none] coordinates {
		(1,7.265)
		(2,8.205)
		(3,8.855)
		(4,9.644)
		(5,7.707)
		(6,4.46)
		(7,5.27)
		(8,6.504)
		(9,6.838)
		(10,8.187)
		(11,10.875)
		(12,12.983)
		(13,11.045)
		(14, 7.226)
		(15,7.756)
		(16,9.925)
		(17, 11.13)
		(18,8.913)
		(19,9.016)
		(20,10.19)
		(21, 11.889)
	};\addlegendentry{E, 10\%}
	\addplot[fill=coolblack, draw=none] coordinates {
		(1,0.478)
		(2,1.013)
		(3,0.591)
		(4,1.427)
		(5,0.306)
		(6,0.499)
		(7,0.456)
		(8,0.233)
		(9,0.718)
		(10,0.287)
		(11,1.34)
		(12,1.98)
		(13,1.313)
		(14,0.642)
		(15,0.440)
		(16,1.49)
		(17, 0.455)
		(18,2.49)
		(19,1.63)
		(20,0.719)
		(21, 2.88)
	};\addlegendentry{S, 10\%}
	\end{axis}
	
	\begin{axis}[ybar stacked, bar shift=-3pt,
		width=\textwidth,height=4.5cm,
		legend style={at={(0.13,0.97)},font=\scriptsize,draw=none},
		legend cell align = {right},
		xticklabels=none,
		bar width=3pt,axis lines=none, xtick=\empty, ytick=\empty,
		ymin=0, ymax=20,
		enlargelimits=0.05,
		xmin = 1,xmax=21]
	\addplot[fill=amber!45!white,draw=none] coordinates {
		(1,7.904)
		(2,7.948)
		(3,9.146)
		(4,8.28)
		(5,8.496)
		(6,4.267)
		(7,4.461)
		(8,6.508)
		(9,7.223)
		(10,8.037)
		(11, 11.289)
		(12,12.43)
		(13,11.007)
		(14, 6.865)
		(15,7.397)
		(16,9.567)
		(17, 9.456)
		(18,10.347)
		(19,10.343)
		(20,10.413)
		(21, 9.896)
	};\addlegendentry{E, 05\%}
	\addplot[fill=amber, draw=none] coordinates {
		(1,0.142)
		(2,0.528)
		(3,0.12)
		(4,0.836)
		(5,0.133)
		(6,0.170)
		(7,0.158)
		(8,0.142)
		(9,0.243)
		(10,0.141)
		(11,0.844)
		(12,0.936)
		(13, 0.897)
		(14,0.527)
		(15,0.184)
		(16,1.367)
		(17,1.94)
		(18,1.422)
		(19,1.521)
		(20,0.815)
		(21,1.349)
	};\addlegendentry{S, 05\%}
	\end{axis}
	
	\begin{axis}[ybar stacked, bar shift=3pt,
		width=\textwidth,height=4.5cm,
		legend style={at={(0.37,0.97)},font=\scriptsize,draw=none},
		legend cell align = {right},
		xticklabels=none,axis lines=none, xtick=\empty, ytick=\empty,
		bar width=3pt,
		ymin=0, ymax=20,
		enlargelimits=0.05,
		xmin = 1,xmax=21]
	\addplot[fill=capri!45!white,draw=none] coordinates {
		(1,8.578)
		(2,7.273)
		(3,9.855)
		(4,8.491)
		(5,7.252)
		(6,4.403)
		(7,4.973)
		(8,7.132)
		(9,7.373)
		(10,7.91)
		(11,11.899)
		(12, 13.72)
		(13, 12.493)
		(14, 8.55)
		(15,12.093)
		(16,10.221)
		(17, 10.836)
		(18,10.964)
		(19, 11.042)
		(20, 12.115)
		(21, 14.143)
	};\addlegendentry{E, 15\%}
	\addplot[fill=capri, draw=none] coordinates {
		(1,0.547)
		(2,1.657)
		(3,0.414)
		(4,2.312)
		(5,0.4640)
		(6,0.571)
		(7,0.804)
		(8,0.152)
		(9,1.051)
		(10,0.12)
		(11,0.676)
		(12,2.013)
		(13, 2.043)
		(14,1.137)
		(15,0.342)
		(16,1.981)
		(17,1.85)
		(18,2.06)
		(19,2.124)
		(20,1.45)
		(21,3.401)
	};\addlegendentry{S, 15\%}
	\end{axis}
\end{tikzpicture}
\label{hard-queries}
\end{figure*}

\movetoappendix{
\begin{figure*}
\caption{Queries used in the experiments with synthetic data.}
\begin{itemize}
\item[] \textbf{FO-rewritable consistent answers:}
\item[] $q_1(z):=\; \exists x, y, v, w\; (R_1(\underline{x},y,z) \land R_2(\underline{y},v,w))$
\item[] $q_2(z,w):=\; \exists x, y, v\; (R_1(\underline{x},y,z) \land R_2(\underline{y}, v,w))$
\item[] $q_3(z):=\; \exists x, y, v, u, d\; (R_1(\underline{x},y,z) \land R_3(\underline{y},v) \land R_2(\underline{v},u,d))$
\item[] $q_4(z,d):=\; \exists x, y, v, u\; (R_1(\underline{x},y,z) \land R_3(\underline{y},v) \land R_2(\underline{v},u,d))$
\item[] $q_5(z):=\; \exists x, y, v, w\; (R_1(\underline{x},y,z) \land R_4(\underline{y,v},w))$
\item[] $q_6(z):=\; \exists x, y, x', w, d\; (R_1(\underline{x},y,z) \land R_2(\underline{x'},y,w) \land R_5(\underline{x},y,d))$
\item[] $q_7(z):=\; \exists x, y, w, d\; (R_1(\underline{x},y,z) \land R_2(\underline{y},x, w) \land R_5(\underline{x},y,d))$\\
\item[] \textbf{In P, but not FO-rewritable, consistent answers:}
\item[] $q_8(z,w):=\; \exists x, y\; (R_1(\underline{x},y,z) \land R_2(\underline{y},x,w))$
\item[] $q_9(z):=\; \exists x, y, w, u, d\; (R_1(\underline{x},y,z) \land R_2(\underline{y},x,w) \land R_4(\underline{y},u,d))$
\item[] $q_{10}(z,w,d):=\; \exists x, y, u\; (R_1(\underline{x},y,z) \land R_2(\underline{y},x,w) \land R_4(\underline{y},u,d))$
\item[] $q_{11}(z):=\; \exists x, y, w\; (R_1(\underline{x},y,z) \land R_2(\underline{y},x,w))$
\item[] $q_{12}(v, d):=\; \exists x, y, z, u\; (R_3(\underline{x},y) \land R_6(\underline{y},z) \land R_1(\underline{z},x, d) \land R_4(\underline{x, u},v))$
\item[] $q_{13}(v):=\; \exists x, y, z, u\; (R_3(\underline{x},y) \land R_6(\underline{y},z) \land R_7(\underline{z},x) \land R_4(\underline{x, u},v))$
\item[] $q_{14}(d):=\; \exists x, y, z, u\; (R_3(\underline{x},y) \land R_6(\underline{y},z) \land R_1(\underline{z},x,d) \land R_7(\underline{x},u))$
\\
\item[] \textbf{coNP-complete consistent answers:}
\item[] $q_{15}(z):=\; \exists x, y, x', w\; (R_1(\underline{x},y,z) \land R_2(\underline{x'},y,w))$
\item[] $q_{16} (z,w):=\; \exists x, y, x'\; (R_1(\underline{x},y,z) \land R_2(\underline{x'},y,w))$
\item[] $q_{17}(z):=\; \exists x, y, x', w, u, d\; (R_1(\underline{x},y,z) \land R_2(\underline{x'},y,w) \land R_4(\underline{y},u,d))$
\item[] $q_{18}(z,w):=\; \exists x, y, x', u, d\; (R_1(\underline{x},y,z) \land R_2(\underline{x'},y,w) \land R_4(\underline{y},u,d))$
\item[] $q_{19}(z,w,d):=\; \exists x, y, x', u\; (R_1(\underline{x},y,z) \land R_2(\underline{x'},y,w) \land R_4(\underline{y},u,d))$
\item[] $q_{20}(z):=\; \exists x, y, x', w, u, d, v\; (R_1(\underline{x},y,z) \land R_2(\underline{x'},y,w) \land R_4(\underline{y},u,d) \land R_3(\underline{u},v))$
\item[] $q_{21}(z,w):=\; \exists x, y, x', u, d, v\; (R_1(\underline{x},y,z) \land R_2(\underline{x'},y,w) \land R_4(\underline{y},u,d) \land R_3(\underline{u},v))$
\end{itemize}\label{fo-queries}
\end{figure*}

\begin{figure*}[!ht]
\caption{The KW-FO-rewritings of the consistent answers to queries $q_1$ to $q_7$.}
\begin{align*}
q_1(f^{(1)}):=\; &\exists s_1 \in R_1 (\forall r_1 \in R_1(\exists s_2 \in R_2(\forall r_2 \in R_2
\\ &(s_1.1 \neq r_1.1 \lor s_2.1 \neq r_2.1 \lor (r_1.2 = r_2.1 \land r_1.3 = f.1)))))\\
q_2(f^{(2)}):=\; &\exists s_1 \in R_1(\forall r_1 \in R_1(\exists s_2 \in R_2(\forall r_2 \in R_2
\\ &(s_1.1 \neq r_1.1 \lor s_2.1 \neq r_2.1 \lor (r_2.3 = f.1 \land r_1.2 = r_2.1 \land r_1.3 = f.2)))))\\
q_3(f^{(1)}):=\; &\exists s_1 \in R_1(\forall r_1 \in R_1(\exists s_2 \in R_3(\forall r_2 \in R_3(\exists s_3 \in R_2(\forall r_3 \in R_2
\\&((s_1.1 \neq r_1.1 \lor s_2.1 \neq r_2.1 \lor s_3.1 \neq r_3.1 \lor (r_2.2 = r_3.1 \land r_1.2 = r_2.1 \land r_1.3 = f.1))))))))\\
q_4(f^{(2)}):=\; &\exists s_1 \in R_1(\forall r_1 \in R_1(\exists s_2 \in R_3(\forall r_2 \in R_3(\exists s_3 \in R_2(\forall r_3 \in R_2
\\&(s_1.1 \neq r_1.1 \lor s_2.1 \neq r_2.1 \lor s_3.1 \neq r_3.1 \\&\lor(r_3.3 = f.1 \land r_2.2 = r_3.1 \land r_1.2 = r_2.1 \land r_1.3 = f.2)))))))\\
q_5(f^{(1)}):=\; &\exists s_1 \in R_1(\forall r_1 \in R_1(\exists s_2 \in R_4(\forall r_2 \in R_4
\\&(s_1.1 \neq r_1.1 \lor s_2.1 \neq r_2.1 \lor s_2.2 \neq r_2.2 \lor (r_1.2 = r_2.1 \land r_1.3 = f.1)))))\\
q_6(f^{(1)}):=\; &\exists s_1 \in R_2(\forall r_1 \in R_2(\exists s_2 \in R_5(\forall r_2 \in R_5(\exists s_3 \in R_1(\forall r_3 \in R_1
\\&(s_1.1 \neq r_1.1 \lor s_2.1 \neq r_2.1 \lor s_3.1 \neq r_3.1 \\&\lor (r_2.1 = r_3.1 \land r_1.2 = r_2.2 \land r_1.2 = r_3.2 \land r_3.3 = f.1)))))))\\
q_7(f^{(1)}):=\; &\exists s_1 \in R_2(\forall r_1 \in R_2(\exists s_2 \in R_5(\forall r_2 \in R_5(\exists s_3 \in R_1(\forall r_3 \in R_1
\\&(s_1.1 \neq r_1.1 \lor s_2.1 \neq r_2.1 \lor s_3.1 \neq r_3.1 \\&\lor (r_1.2 = r_2.1 \land r_1.2 = r_3.1 \land r_1.1 = r_2.2 \land r_1.1 = r_3.2 \land r_3.3 = f.1)))))))\\
\end{align*}
\vspace{-0.3in}\\
All variables range over tuples in relations; $f^{(i)}$ denotes a tuple with $i$ elements, $i = 1, 2$.
\label{fo-rewritings}
\end{figure*}

\begin{figure*}[!ht]
\caption{The SQL translations of the ConQuer-FO-rewritings of the consistent answers to queries $q_1$ to $q_4$.}

\begin{minipage}[t]{0.45\textwidth}
$q_1(R1\_3):$\\
Candidates as (\\
select distinct R1.3 as R1\_3,R1.1 as R1\_1\\
from R1, R2\\
where (R1.2 = R2.1))\\

Filter as (select R1\_1 \\
from Candidates C join R1 on C.R1\_1 = R1.1 \\
left outer join R2 on R1.2 = R2.1 \\
where (R2.1 is null) \\
union all select C.R1\_1 from Candidates C \\
group by C.R1\_1 \\
having count(*) $>$ 1)\\

select R1\_3 \\
from Candidates C \\
where not exists \\
(select * from Filter F where C.R1\_1 = F.R1\_1)\\

$q_3(R1\_3):$\\
Candidates as (\\
select distinct R1.3 as R1\_3, R1.1 as R1\_1\\
from R1, R3, R2\\
where (R3.2 = R2.1) and (R1.2 = R3.1))\\

Filter as (\\
select R1\_1 \\
from Candidates C join R1 on C.R1\_1 = R1.1 \\
left outer join R3 on R1.2 = R3.1 left outer join R2 on R3.2 = R2.1 \\
where (R3.1 is null OR R2.1 is null) \\
union all select C.R1\_1 from Candidates C \\
group by C.R1\_1 \\
having count(*) $>$ 1)\\

select R1\_3 \\
from Candidates C \\
where not exists \\
(select * from Filter F where C.R1\_1 = F.R1\_1)\\
\end{minipage}
\hspace{0.1\textwidth}
\begin{minipage}[t]{0.45\textwidth}
$q_2(R1\_3, R2\_3):$\\
Candidates as (\\
select distinct R1.3 as R1\_3, R2.3 as R2\_3,\\
R1.1 as R1\_1 from R1, R2\\
where (R1.2 = R2.1))\\

Filter as (select R1\_1 \\
from Candidates C join R1 on C.R1\_1 = R1.1 \\
left outer join R2 on R1.2 = R2.1 \\
where (R2.1 is null) \\
union all select C.R1\_1 from Candidates C \\
group by C.R1\_1 \\
having count(*) $>$ 1)\\

select R1\_3, R2\_3 \\
from Candidates C \\
where not exists \\
(select * from Filter F where C.R1\_1 = F.R1\_1)\\

$q_4(R1\_3, R2\_3):$\\
Candidates as (\\
select distinct R1.3 as R1\_3, R2.3 as R2\_3,R1.1 as R1\_1\\
from R1, R3, R2\\
where (R3.2 = R2.1) and (R1.2 = R3.1))\\

Filter as (select R1\_1 \\
from Candidates C join R1 on C.R1\_1 = R1.1 \\
left outer join R3 on R1.2 = R3.1 left outer join R2 on R3.2 = R2.1 \\
where (R3.1 is null or R2.1 is null) \\
union all select C.R1\_1 from candidates C \\
group by C.R1\_1 \\
having count(*) $>$ 1)\\

select R1\_3, R2\_3 \\
from Candidates C \\
where not exists \\
(select * from Filter F where C.R1\_1 = F.R1\_1)\\
\end{minipage}

\label{conquer-fo-rewritings}
\end{figure*}

}
\noindent{\bf Results on the Real-World Databases}
In this set of experiments, we evaluated the performance of CAvSAT using Reduction \ref{reduction3} on real-world data having key constraints on each relation, along with one functional dependency. The data used are about  inspections of food establishments in New York and Chicago, and are taken from \cite{nydata} and \cite{chicagodata}. Part of this data have been previously used for evaluating  data cleaning systems, such as HoloClean \cite{Rekatsinas17}. Since the structure of the schema or the constraints on the database were not specified by the source, we decomposed the data into four relations, and assumed reasonable key constraints for all relations and also one additional functional dependency, as shown in Table \ref{real-data-schema}. We evaluated the performance of Reduction \ref{reduction3} on six queries depicted in Figure \ref{dc-queries} in the Appendix. For example, query $Q_3$ returns the names of the restaurants, such that they are present in both New York and Chicago, and they were inspected on the same day.  Figure \ref{real-queries} shows that the solver took the most amount of time to compute answers to this query. Not surprisingly, the evaluation time increases as the number of atoms or the number of free variables in the query grow. Table \ref{formula-size-realdata} shows the size of the CNF-formulas produced by Reduction \ref{reduction4} (where Reduction \ref{reduction3} is used inside Reduction \ref{reduction4}). No optimization was implemented in this set of experiments.
\movetoappendix{
\begin{figure*}[!ht]
\caption{Queries used on the real-world database.}
\begin{align*}
Q_1():=\; &\exists x, y, z, w, v, y', z', w', v' \; (\text{NY\_Rest}(x,y,z,w,v) \land \text{CH\_Rest}(x,y',z',w',v'))\\
Q_2(x):=\; &\exists y, z, w, v, y', z', w', v' \; (\text{NY\_Rest}(x,y,z,w,v) \land \text{CH\_Rest}(x,y',z',w',v'))\\
Q_3(x):=\; &\exists y, z, w, v, y', z', w', v', q, r, s, t, q', s', t' \; (\text{NY\_Rest}(x,y,z,w,v) \land \text{CH\_Rest}(x,y',z',w',v')\\ &\land \text{NY\_Insp}(y,q,r,s,t) \land \text{CH\_Insp}(y',q',r,s',t'))\\
Q_4(x, y):=\; &\exists z, w, v, q, r, s \; (\text{CH\_Rest}(x,y,z,w,v)\land \text{CH\_Insp}(y,q,r,s,\text{`Pass'}))\\
Q_5(x):=\; &\exists y, z, w, v, q, r, s \; (\text{CH\_Rest}(x,y,z,w,v)\land \text{CH\_Insp}(y,q,r,s,\text{`Fail'}))\; \cup\\
&\exists y, z, w, v, q, r, s \; (\text{NY\_Rest}(x,y,z,w,v)\land \text{NY\_Insp}(y,q,r,s,\text{`Fail'}))\\
Q_6(x, v):=\; &\exists y, z, w, y', z', w', v', q, r, s \; (\text{CH\_Rest}(x,y,z,w,v)\land \text{NY\_Rest}(x,y',z',w',v')\\&\land \text{NY\_Insp}(y',\text{`Not Critical'},q,r,s))\\
\end{align*}
\label{dc-queries}
\end{figure*}
}
\vspace{-0.3in}
\begin{table}[!ht]
\caption{The schema and the constraints of the real-world database.}
\centering
\begin{tabular}[t]{|l|l|}\hline
Relation & \# Tuples
\\\hline\hline
NY\_Insp (\underline{LicenseNo}, Risk, \underline{InspDate, InspType}, Result) & 229K \\\hline
NY\_Rest (Name, \underline{LicenseNo}, Cuisine, Address, Zip) & 26.5K\\\hline
CH\_Insp (\underline{LicenseNo}, Risk, \underline{InspDate, InspType}, Result) & 167K\\\hline
CH\_Rest (Name, \underline{LicenseNo}, Facility, Address, Zip) & 31.1K \\\hline
\end{tabular}
\vspace{0.08in}\\
\begin{tabular}[t]{|l|l|l|}\hline
Constraint & Type & Violations
\\\hline\hline
NY\_Insp (LicenseNo, InspDate, InspType $\rightarrow$ Risk, Result) & Key & 25.6\%\\\hline
NY\_Rest (LicenseNo $\rightarrow$ Name, Cuisine, Address, Zip) & Key & 0\%\\\hline
CH\_Insp (LicenseNo, InspDate, InspType $\rightarrow$ Risk, Result) & Key & 0.07\%\\\hline
CH\_Rest (LicenseNo $\rightarrow$ Name, Cuisine, Address, Zip) & Key & 5.86\%\\\hline
CH\_Rest (Name $\rightarrow$ Zip) & FD & 9.73\%\\\hline
\end{tabular}
\label{real-data-schema}
\end{table}
\vspace{-0.4in}
\begin{table}[!ht]
\begin{minipage}[t]{0.55\linewidth}

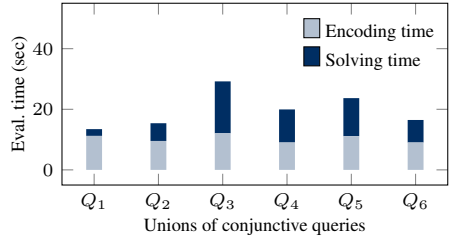
\captionof{figure}{Evaluation time of CAvSAT on real data.}
    \centering
    \vspace{0.12in}
    \begin{tikzpicture}
\begin{axis}[ybar stacked,
		width=\textwidth,height=4cm,
		xlabel= Unions of conjunctive queries,
		label style={font=\scriptsize},
		xtick=data,
		xticklabels={$Q_1$,$Q_2$,$Q_3$,$Q_4$,$Q_5$,$Q_6$},
		legend style={at={(1,0.95)},font=\scriptsize,draw=none,fill=none},
		legend cell align = {left},
		bar width=6pt,
		ymin=0, ymax=50,
		enlargelimits=0.1,
		y label style={at={(0.13,0.5)}},
		x label style={at={(0.5,0.08)}},
		xmin = 1,xmax=6,
		ylabel=Eval. time (sec)]
	\addplot[fill=coolblack!30!white,draw=none] coordinates {
	(1, 11.216)
		(2,9.544) 
		(3,12.079)
		(4,9.035)
		(5,11.128)
		(6, 9.034)
	};\addlegendentry{Encoding time}
	\addplot[fill=coolblack, draw=none] coordinates {
	(1, 2.235)
		(2,5.854)
		(3,17.130)
		(4,10.915)
		(5,12.554)
		(6, 7.437)
	};\addlegendentry{Solving time}
	\end{axis}
\end{tikzpicture}
    \label{real-queries}
  \end{minipage}
  \hspace{2em}
  \begin{varwidth}[t]{0.45\linewidth}
  \caption{Size of the CNF-formula.}
  \begin{center}
  \renewcommand{\arraystretch}{1.1}
\setlength{\tabcolsep}{3pt}
    \begin{tabular}[h]{|c|c|c|}\hline
Query & Variables & Clauses\\\hline\hline
$Q_1$ & 455.1K & 793.7K\\\hline
$Q_2$ & 456.5K& 794K \\\hline
$Q_3$ &455.1K &671.5K \\\hline
$Q_4$ & 476K & 861.5K\\\hline
$Q_5$ & 486.7K & 836.2K\\\hline
$Q_6$ & 455.5K & 1.12M\\\hline
\end{tabular}\label{formula-size-realdata}
\end{center}
  \end{varwidth}
\end{table}
\newpage
\section{Concluding Remarks}
We designed and implemented CAvSAT, the first SAT-based system for consistent query answering. Our preliminary stand-alone evaluation shows that a SAT-based approach can give rise to a scalable system for consistent query answering. We note that, on queries with first-order rewritable consistent answers, CAvSAT had comparable or even better performance to evaluating the first-order rewritings using a database engine. This finding suggests a potential difference between theory and practice, since the study of first-order rewritability of the consistent answers was motivated from having an efficient evaluation of consistent answers using the database engine alone.

The next step in this investigation is to carry out an extensive comparative evaluation of CAvSAT with other systems for consistent query answering and, in particular, with systems that use reduction-based methods \cite{Kolaitis13,MannaRT11}.

\smallskip

\noindent{\bf Acknowledgments} Dixit is supported by the Center for Research in Open Source Software (CROSS) at UC Santa Cruz. Kolaitis  is  supported by NSF Grant IIS:1814152.

\clearpage
\bibliography{cqa-references,dataexchange,bib}

\begin{thebibliography}{10}
\providecommand{\url}[1]{\texttt{#1}}
\providecommand{\urlprefix}{URL }
\providecommand{\doi}[1]{https://doi.org/#1}

\bibitem{chicagodata}
{Food Inspections, City of Chicago} (Aug 2011),
  \url{https://data.cityofchicago.org/Health-Human-Services/Food-Inspections/4ijn-s7e5}

\bibitem{nydata}
{New York City Restaurant Inspection Results, Department of Health and Mental
  Hygiene (DOHMH)} (Aug 2014),
  \url{https://data.cityofnewyork.us/Health/DOHMH-New-York-City-Restaurant-Inspection-Results/43nn-pn8j}

\bibitem{Arenas99}
Arenas, M., Bertossi, L., Chomicki, J.: Consistent query answers in
  inconsistent databases. In: Proceedings of the Eighteenth ACM
  SIGMOD-SIGACT-SIGART Symposium on Principles of Database Systems. pp. 68--79.
  PODS '99, ACM, New York, NY, USA (1999). \doi{10.1145/303976.303983},
  \url{http://doi.acm.org/10.1145/303976.303983}

\bibitem{Arenas03}
Arenas, M., Bertossi, L.E., Chomicki, J.: Answer sets for consistent query
  answering in inconsistent databases. {TPLP}  \textbf{3}(4-5),  393--424
  (2003). \doi{10.1017/S1471068403001832},
  \url{https://doi.org/10.1017/S1471068403001832}

\bibitem{Barcelo03}
Barcel{\'{o}}, P., Bertossi, L.E.: Logic programs for querying inconsistent
  databases. In: Practical Aspects of Declarative Languages, 5th International
  Symposium, {PADL} 2003, New Orleans, LA, USA, January 13-14, 2003,
  Proceedings. pp. 208--222 (2003). \doi{10.1007/3-540-36388-2\_15},
  \url{https://doi.org/10.1007/3-540-36388-2\_15}

\bibitem{Bertossi11}
Bertossi, L.E.: Database Repairing and Consistent Query Answering. Synthesis
  Lectures on Data Management, Morgan {\&} Claypool Publishers (2011).
  \doi{10.2200/S00379ED1V01Y201108DTM020},
  \url{https://doi.org/10.2200/S00379ED1V01Y201108DTM020}

\bibitem{CateFK12}
ten Cate, B., Fontaine, G., Kolaitis, P.G.: On the data complexity of
  consistent query answering. In: Int.\ Conf.\ on Database Theory (ICDT). pp.
  22--33 (2012)

\bibitem{Chomicki04}
Chomicki, J., Marcinkowski, J., Staworko, S.: Computing consistent query
  answers using conflict hypergraphs. In: Proceedings of the Thirteenth ACM
  International Conference on Information and Knowledge Management. pp.
  417--426. CIKM '04, ACM, New York, NY, USA (2004).
  \doi{10.1145/1031171.1031254},
  \url{http://doi.acm.org/10.1145/1031171.1031254}

\bibitem{ChomickiH04}
Chomicki, J., Marcinkowski, J., Staworko, S.: Hippo: A system for computing
  consistent answers to a class of sql queries. In: Advances in Database
  Technology - EDBT 2004. pp. 841--844. Springer Berlin Heidelberg, Berlin,
  Heidelberg (2004)

\bibitem{Davies2011}
Davies, J., Bacchus, F.: Solving {MAXSAT} by solving a sequence of simpler
  {SAT} instances. In: Lee, J. (ed.) Principles and Practice of Constraint
  Programming -- CP 2011. pp. 225--239. Springer Berlin Heidelberg, Berlin,
  Heidelberg (2011)

\bibitem{Fuxman05}
Fuxman, A., Fazli, E., Miller, R.J.: {ConQuer}: Efficient management of
  inconsistent databases. In: Proceedings of the 2005 ACM SIGMOD International
  Conference on Management of Data. pp. 155--166. SIGMOD '05, ACM, New York,
  NY, USA (2005). \doi{10.1145/1066157.1066176},
  \url{http://doi.acm.org/10.1145/1066157.1066176}

\bibitem{FuxmanM05}
Fuxman, A., Fuxman, D., Miller, R.J.: {ConQuer}: A system for efficient
  querying over inconsistent databases. In: Proceedings of the 31st
  International Conference on Very Large Data Bases. pp. 1354--1357. VLDB '05,
  VLDB Endowment (2005),
  \url{http://dl.acm.org/citation.cfm?id=1083592.1083774}

\bibitem{FuxmanM07}
Fuxman, A., Miller, R.J.: First-order query rewriting for inconsistent
  databases. J. Comput. Syst. Sci.  \textbf{73}(4),  610--635 (2007)

\bibitem{Greco03}
Greco, G., Greco, S., Zumpano, E.: A logical framework for querying and
  repairing inconsistent databases. IEEE Trans. on Knowl. and Data Eng.
  \textbf{15}(6),  1389--1408 (Nov 2003). \doi{10.1109/TKDE.2003.1245280},
  \url{https://doi.org/10.1109/TKDE.2003.1245280}

\bibitem{Grieco05}
Grieco, L., Lembo, D., Rosati, R., Ruzzi, M.: Consistent query answering under
  key and exclusion dependencies: Algorithms and experiments. In: Proceedings
  of the 14th ACM International Conference on Information and Knowledge
  Management. pp. 792--799. CIKM '05, ACM, New York, NY, USA (2005).
  \doi{10.1145/1099554.1099742},
  \url{http://doi.acm.org/10.1145/1099554.1099742}

\bibitem{DBLP:journals/ftdb/IlyasC15}
Ilyas, I.F., Chu, X.: Trends in cleaning relational data: Consistency and
  deduplication. Foundations and Trends in Databases  \textbf{5}(4),  281--393
  (2015). \doi{10.1561/1900000045}, \url{https://doi.org/10.1561/1900000045}

\bibitem{Kolaitis12}
Kolaitis, P.G., Pema, E.: A dichotomy in the complexity of consistent query
  answering for queries with two atoms. Inf. Process. Lett.  \textbf{112}(3),
  77--85 (Jan 2012). \doi{10.1016/j.ipl.2011.10.018},
  \url{http://dx.doi.org/10.1016/j.ipl.2011.10.018}

\bibitem{Kolaitis13}
Kolaitis, P.G., Pema, E., Tan, W.: Efficient querying of inconsistent databases
  with binary integer programming. {PVLDB}  \textbf{6}(6),  397--408 (2013).
  \doi{10.14778/2536336.2536341},
  \url{http://www.vldb.org/pvldb/vol6/p397-tan.pdf}

\bibitem{Koutris15}
Koutris, P., Wijsen, J.: The data complexity of consistent query answering for
  self-join-free conjunctive queries under primary key constraints. In:
  Proceedings of the 34th ACM SIGMOD-SIGACT-SIGAI Symposium on Principles of
  Database Systems. pp. 17--29. PODS '15, ACM, New York, NY, USA (2015).
  \doi{10.1145/2745754.2745769},
  \url{http://doi.acm.org/10.1145/2745754.2745769}

\bibitem{Koutris16}
Koutris, P., Wijsen, J.: Consistent query answering for primary keys. SIGMOD
  Rec.  \textbf{45}(1),  15--22 (Jun 2016). \doi{10.1145/2949741.2949746},
  \url{http://doi.acm.org/10.1145/2949741.2949746}

\bibitem{Koutris17}
Koutris, P., Wijsen, J.: Consistent query answering for self-join-free
  conjunctive queries under primary key constraints. ACM Trans. Database Syst.
  \textbf{42}(2),  9:1--9:45 (Jun 2017). \doi{10.1145/3068334},
  \url{http://doi.acm.org/10.1145/3068334}

\bibitem{Lembo06}
Lembo, D., Rosati, R., Ruzzi, M.: On the first-order reducibility of unions of
  conjunctive queries over inconsistent databases. In: Proceedings of the 2006
  International Conference on Current Trends in Database Technology. pp.
  358--374. EDBT'06, Springer-Verlag, Berlin, Heidelberg (2006).
  \doi{10.1007/11896548\_28}, \url{http://dx.doi.org/10.1007/11896548\_28}

\bibitem{MannaRT11}
Manna, M., Ricca, F., Terracina, G.: Consistent query answering via {ASP} from
  different perspectives: Theory and practice. CoRR  \textbf{abs/1107.4570}
  (2011), \url{http://arxiv.org/abs/1107.4570}

\bibitem{MarileoB10}
Marileo, M.C., Bertossi, L.E.: The consistency extractor system: Answer set
  programs for consistent query answering in databases. Data Knowl. Eng.
  \textbf{69}(6),  545--572 (2010). \doi{10.1016/j.datak.2010.01.005},
  \url{https://doi.org/10.1016/j.datak.2010.01.005}

\bibitem{Rekatsinas17}
Rekatsinas, T., Chu, X., Ilyas, I.F., R{\'e}, C.: Holoclean: Holistic data
  repairs with probabilistic inference. Proc. VLDB Endow.  \textbf{10}(11),
  1190--1201 (Aug 2017). \doi{10.14778/3137628.3137631},
  \url{https://doi.org/10.14778/3137628.3137631}

\bibitem{Wijsen09}
Wijsen, J.: Consistent query answering under primary keys: A characterization
  of tractable queries. In: Proceedings of the 12th International Conference on
  Database Theory. pp. 42--52. ICDT '09, ACM, New York, NY, USA (2009).
  \doi{10.1145/1514894.1514900},
  \url{http://doi.acm.org/10.1145/1514894.1514900}

\bibitem{WijsenR10}
Wijsen, J.: A remark on the complexity of consistent conjunctive query
  answering under primary key violations. Information Processing Letters
  \textbf{110}(21),  950 -- 955 (2010).
  \doi{https://doi.org/10.1016/j.ipl.2010.07.021},
  \url{http://www.sciencedirect.com/science/article/pii/S0020019010002395}

\bibitem{Wijsen12}
Wijsen, J.: Certain conjunctive query answering in first-order logic. ACM
  Trans. Database Syst.  \textbf{37}(2),  9:1--9:35 (Jun 2012).
  \doi{10.1145/2188349.2188351},
  \url{http://doi.acm.org/10.1145/2188349.2188351}

\bibitem{Wijsen13}
Wijsen, J.: Charting the tractability frontier of certain conjunctive query
  answering. In: Proceedings of the 32Nd ACM SIGMOD-SIGACT-SIGAI Symposium on
  Principles of Database Systems. pp. 189--200. PODS '13, ACM, New York, NY,
  USA (2013). \doi{10.1145/2463664.2463666},
  \url{http://doi.acm.org/10.1145/2463664.2463666}

\end{thebibliography}
\clearpage
\appendix
\section*{Appendix}
\immediate\closeout\appendixfile\input{auto-appendix.tex}
\end{document}